\newtheorem{theorem}{Theorem}
\newtheorem{lemma}{Lemma}
\begin{document}

\title{Maximal entanglement of two delocalized spin-$\frac{1}{2}$ particles}

\author{Markus Johansson} 
\affiliation{ICFO-Institut de Ciencies Fotoniques, The Barcelona Institute of Science and Technology,
08860 Castelldefels (Barcelona), Spain}


\date{\today}

\begin{abstract}
We describe the entanglement of two indistinguishable delocalized spin-$\frac{1}{2}$ particles in the simplest spatial configuration of three spatial modes with the constraint that at most one particle occupy each mode. It is show that this is the only number of modes for which maximally entangled states exist in such a system. The set of entangled states, including the set of maximally entangled states, is described and different types of entanglement in terms of Bell-nonlocal correlations for different partitions of the system are identified. In particular we focus on the entangled states that are Bell-local for a tri-partition of the system and cannot be described as a superposition of bi-partite entangled pairs of localized particles. Two entanglement invariants are constructed and it is shown that all entanglement monotones are functions of these.
Furthermore, the system has a generic non-trivial local unitary symmetry with a corresponding $2\pi/3$ fractional topological phase. In addition to this a necessary and sufficient condition for the existence of maximally entangled states in systems of arbitrary numbers of delocalized particles with arbitrary spin, where at most one particle can occupy each mode, is derived.
\end{abstract}

\pacs{03.67.-a, 03.67.Mn, 03.65.Ud, 03.65.Vf}

\maketitle
\section{Introduction}

Entanglement is a feature of quantum mechanics that does not exist in classical mechanics. As such, it is responsible for many phenomena without a classical counterpart.
This has made entanglement a resource for quantum information processing tasks, including teleportation \cite{bennett1} and dense coding \cite{bennett2}. There is also evidence that entanglement plays a role in many-body phenomena such as magnetism \cite{brukner,christensen}. Another effect of entanglement is Bell-nonlocality \cite{Bell}, i.e., correlations between distant events that can not be explained by a local causal model.
Bell-nonlocality in itself has been realized to be a resource for quantum informations tasks, e.g., device independent quantum information protocols \cite{mayersyao,colbeck,acin0}.

The entanglement between a small number of particles with well defined locations has been extensively studied and characterized  \cite{mahler,wootters,popescu2,coffman,dur,carteret,sudbery,acin2,acin,luquethibon1,verstraete2002,
Osterlohsiewert2005,kraus}. The tools and concepts used to characterize entanglement in these systems includes Stochastic Local Operations and Classical Communication (SLOCC) \cite{bennett,dur}, entanglement invariants \cite{linden,grassl,wootters,popescu2,coffman,sudbery,luquethibon1,Osterlohsiewert2005} and measures \cite{vidal}, and maximally entangled states \cite{mahler}.

However, in systems of indistinguishable particles where the spatial locations of the particles are not well defined the entanglement between individual particles does not have an operational meaning. This type of systems have been less studied.  
A natural way to describe entanglement in this case is to consider the modes of the system accessible to observers rather than the particles themselves as the operationally relevant objects which can be entangled \cite{knill,zanardi2,zanardi,fisher,zanardi3,barnumviola}.

If the number of available spatial modes is larger than the number of particles, it is possible for the particles to delocalize over the modes. Systems of this type include particles delocalized in a lattice, such as quasi-particle excitations in a crystal lattice, or atoms in an optical lattice \cite{bloch,plenio,gross}. Other examples of spatial modes where particles can delocalize are the traces of a quantum circuit or the optical paths in a photonic device.

If the spatial modes are sufficiently small compared to the effective size of the particles so that their interaction is non-negligible, a repulsive interaction can prevent more than one particle from occupying the same mode. Such a repulsive interaction can be for example Coulomb repulsion, or in the case of composite particles made up of constituent fermion particles, the Fermi-repulsion between the constituent particles \cite{dyson}.  An example of a simple model with this type of behaviour is the Hubbard model \cite{essler} with a strong repulsive on-site interaction term. 
Entangled states created under this type of conditions would have at most one particle per mode.




The entanglement of a system of delocalized particles has a richer structure than a system of the same number of localized particles since the internal degrees of freedom of a particle in a given mode can be entangled not only with internal degrees of freedom of a different particle but also with its spatial degree of freedom. The concepts and tools used to characterize the entanglement of localized particles are operationally well defined and can be applied also to the case of delocalized particles. However, the properties of such systems are in some ways qualitatively different.

In this paper we consider the entanglement of two indistinguishable delocalized spin-$\frac{1}{2}$ particles, or more generally any particles with two internal degrees of freedom. Moreover, we assume that at most one particle can exist in any given spatial mode.
In particular we study the simplest such system where three spatial modes are available. 
It is shown that for systems of two  such delocalized particles, where at most one particle can occupy each spatial mode, this is the only number of modes for which entanglement monotones, i.e., measures of entanglement, can be defined. Therefore, it is also the only system of this kind where maximally entangled states exist. Furthermore, the delocalization of the entangled particles gives rise to a generic non-trivial intrinsic symmetry. To this symmetry corresponds a fractional topological phase $2\pi/3$.

We discuss the role of the conservation of particle number and identify the group of SLOCC in the present context. 
Following this we construct two invariants under SLOCC from which entanglement monotones can be derived. These two invariants generate the full set of SLOCC invariants.
Furthermore, we parametrize the set of entangled states and
characterize different types of entanglement in terms of their Bell-nonlocal correlations for different partitions of the modes of the system and between different parties. 
In particular we consider the entangled states which cannot be understood as super-positions of pairs of bipartite entangled localized particles. These states are Bell-local for a tri-partition of the system.

We show that a non-zero value of an entanglement monotone implies that the state is Bell-nonlocal over every bi-partition. Therefore, in particular, all maximally entangled states are Bell-nonlocal for every bi-partition. We describe the set of maximally entangled states, including the unique maximally entangled state which is Bell-local for a tri-partition of the system.

In addition to the study of two spin-$\frac{1}{2}$ particles, we consider the general condition for the existence of maximally entangled in a system of an arbitrary number of delocalized particles with at most one particle per mode. A necessary and sufficient criterion for the existence of maximally entangled states is given as a relation between the number of particles, the number of modes, and the spin of the particles. Moreover, it is shown that in this general case a non-zero value of an entanglement monotone implies that the state is Bell-nonlocal over every bi-partition of the system into a single spatial mode and the remaining modes.


The outline of the paper is the following. In section \ref{sec2} we define the system and discuss the role of the particle conservation super-selection rule for entanglement and Bell-nonlocality. We then describe the group of SLOCC and construct the invariants under its action. Section \ref{sec3} contains a description of the set of entangled states, with a focus on the Bell-local states for a tri-partition of the system, and a brief description of the generic intrinsic symmetry of the system. In section \ref{sec4} the set of maximally entangled states is described. In Sect. \ref{gen} the general criterion for the existence of maximally entangled states in systems of delocalized particles with at most one particle per mode is given. The paper ends with the conclusions.

\section{Entanglement and Bell-nonlocality of delocalized particles}
\label{sec2}

The scenario we consider is a system of two indistinguishable particles, each with two internal degrees of freedom. The example used is two spin-$\frac{1}{2}$ particles and we will therefore denote the internal states by $|\!\uparrow\rangle$ and $|\!\downarrow\rangle$, but the results of the paper are independent of the  physical nature of the two internal degrees of freedom.
We assume that the particles have access to a finite number of spatial modes and that their locations are not well defined, i.e., they are delocalized. To each spatial mode we associate a party, i.e., an observer, capable of performing measurements on a particle that occupies its spatial mode. We also occasionally consider parties with access to more than one spatial mode.
Since the particle locations are not well defined it is not operationally meaningful to consider entanglement between degrees of freedom of the individual particles. Instead we consider entanglement between the degrees of freedom associated with the parties \cite{knill,zanardi2,zanardi,fisher,zanardi3,barnumviola}, i.e., entanglement between the modes \cite{zanardi,fisher}. What defines the different parts of the system that can be entangled is thus the subsets of operations that the different parties can perform \cite{knill,zanardi2,zanardi3,barnumviola}. In our case these subsets are defined by the spatial separation of the parties, but more general cases can be considered \cite{barnumviola}.
In particular we study the simplest spatial configuration that allows delocalization, i.e., three spatial modes.

We make the additional assumption that at most one particle is found in any given spatial mode. 
The local Hilbert space of a spatial mode is thus spanned by the two internal degrees of freedom $|\!\uparrow\rangle,|\!\downarrow\rangle$ and $|0\rangle$ representing the absence of a particle. It is the entanglement of these local degrees of freedom of the different spatial modes we consider. In particular we take the maximally entangled states in such a system to be those with maximally mixed one-mode reduced density matrices in analogy with the case of localized particles. For such a state any party has the same probability of finding $|\!\uparrow\rangle$, as $|\!\downarrow\rangle$ or $|0\rangle$ when performing a local measurement.
Furthermore, it is implicit in the definition of the system that the global particle number is conserved.
The Hilbert space of the full system is therefore spanned by the twelve vectors $|ij0\rangle,|0ij\rangle,|i0j\rangle$, where $i,j\in\{\uparrow,\downarrow\}$.

To apply the methods of characterizing the entanglement in this system, such as constructing entanglement invariants or demonstrating Bell-nonlocal correlations the first step is to identify the group of local operations that can be performed. 
For this the global conservation of particle number has important implications.

\subsection{Entanglement and Bell-nonlocality in a system with a superselection rule}

The notion of entanglement depends on the division of the system into parts corresponding to different parties. For three parties A, B, and C we denote such a partitioning of the system by A|B|C. It is assumed that a given party, e.g. party A, can only act on a subset of the degrees of freedom of the system that are locally accessible to A described by the local Hilbert space $H_A$. Alternatively, for a bi-partitioning of the system, e.g., A|BC, it is assumed that the degrees of freedom of both mode B and C are accessible to a party who can perform operations on the Hilbert space $H_B\otimes H_C$. The full Hilbert space of the shared system is described by the tensor product  $H=H_A\otimes H_{B}\otimes H_{C}$ of the individual local Hilbert spaces.
A state, represented by it density matrix $\rho$, is entangled with respect to a partitioning if it cannot be expressed as a {\it separable} state. For example, a three-partite state is separable if it can be expressed as

\begin{eqnarray}\label{sep}\rho=\sum_{k}p_k\rho_A^k\otimes\rho_B^k\otimes\rho_C^k,\end{eqnarray} 
where $\rho_X^k$ is a reduced density matrix for party X, $p_k>0$, and $\sum_k p_k=1$.
The notion of separability clearly depends on the partitioning of the system.
A state that is non-separable for the partition A|B|C may be separable for the partition A|BC. However, a state which is separable for the partitioning A|B|C is separable for any bi-partitioning such as A|BC.

In a system with a conserved quantity, for example a charge, a particle number, or a parity, the set of physically allowed operations is restricted to those that conserve the given quantity. This restriction, called a {\it super-selection rule}, imposes a restriction on the local operations the parties can perform. For an introduction to superselection rules see e.g. Ref. \cite{moretti}.

No local operations that change the conserved quantity are allowed. For example, if the total value of the conserved quantity is the sum of local values, as is the case with particle number, the allowed local operations can not change the local values. In this case we can decompose the Hilbert space $H$ of a system with $n$ spatial modes as a direct sum $H=\oplus_{q_1,q_2,\dots,q_n} H_{q_1,q_2,\dots,q_n}$ where $H_{q_1,q_2,\dots,q_n}$ is the Hilbert space of states with local conserved quantities $q_1,q_2,\dots,q_n$ of the $n$ modes. The allowed local operations cannot transform between the different sectors defined by the different $H_{q_1,q_2,\dots,q_n}$. For example, a local operation $g$ on the part of the system with local conserved quantity $q_1$ can not change the value of $q_1$. Therefore, it can be given a block-diagonal matrix form where each block is a transformation on a subspace of the local Hilbert space corresponding to a given value of $q_1$, i.e.,


\begin{eqnarray}g=\left( \begin{array}{cccccc}
\! {\bf B}_{q_1=n} & \! 0 & 0 & \cdots & 0 & 0\\
\! 0 & {\bf B}_{q_1=n-1} & 0 & \cdots & 0 & 0\\
\! 0 & 0 &  \ddots &  & \vdots & \vdots \\
\! \vdots & \vdots & & \ddots &  0 & 0\\
\! 0 &  0   & \cdots & 0 & {\bf B}_{q_1=1} & 0\\
\! 0 & 0  & \cdots & 0 & 0 & {\bf B}_{q_1=0}\end{array} \!\right),
\end{eqnarray}
where ${\bf B}_{q_1=i}$ is the block containing an operation that preserves the value $q_1=i$ of the local conserved quantity. 
This form of the local operations implies that a coherent superposition $\frac{1}{\sqrt{2}}(|\psi\rangle+|\phi\rangle)$ where $|\psi\rangle$ and $|\phi\rangle$ belong to different sectors, i.e., different values of the local conserved quantities, cannot be distinguished from the incoherent mixture
$\frac{1}{2}|\psi\rangle\langle\psi|+\frac{1}{2}|\phi\rangle\langle\phi|$ by means of local operations.

A super-selection rule can therefore limit the ability of the parties to distinguish different states. In particular it can limit the ability to distinguish entangled from separable states. More precisely, for an entangled state $\rho$ there may exist a separable state $\rho_s$ such that all physically allowed observables by parties A, B, and C yield the same expectation values. Thus we say that $\rho$ is indistinguishable from the separable state $\rho_s$ given the partitioning A|B|C if for all local measurements $Q_{A},Q_{B},Q_{C}$

\begin{eqnarray}
Tr(Q_{A}\times Q_{B}\times Q_{C} \rho)=Tr(Q_{A}\times Q_{B}\times Q_{C} \rho_s).
\end{eqnarray}

However, a state that is indistinguishable from a separable state given the partitioning A|B|C may still have entanglement that is detectable given a more coarse partitioning, e.g. A|BC. This is in contrast to the case without super-selection rules where the entanglement of the system can be fully detected and characterized using the most fine grained partitioning. Thus, in general the operational characterization of entanglement in systems with super-selection rules require all partitionings to be considered. 

If there is no super-selection rule, parties that share a pure entangled state can always find local measurements   
with outcome correlations that cannot be described by a local hidden variable model, i.e., a local description of reality \cite{gisin,popescu3}. Correlations of this kind are called Bell-nonlocal \cite{Bell}. Such nonlocal outcome correlations can be characterized by the violation of a so-called Bell-inequality.
A Bell inequality is a linear expression in the measurement expectation values and their correlators which has a maximum value for Bell-local correlations that can be exceeded by non-local correlations.
For two such parties sharing a system where the local Hilbert spaces are spanned by two degrees of freedom a certain Bell-ineguality, the CHSH-inequality \cite{CHSH}, is violated by any given pure entangled state. However, there exist mixed non-separable states for which all measurement outcome correlations are Bell-local \cite{werner,barret}. Thus, there exist mixed states which can be verified to be entangled by a state tomography but are still Bell-local. 

In the presence of a super-selection rule a pure entangled state may be indistinguishable from a separable state, and thus all correlations are Bell-local. This is in contrast to the case without super-selection rules where such states do not exist.

Note that super-selection rules can allow for quantum information tasks which are not possible otherwise. For example, coding data in the state of the system in such a way that it can only be read by global measurements and is hidden from parties making local measurements \cite{ciracverstraete}.

\subsection{The particle conservation superselection rule}

The global conservation of particle number implies that any local unitary operation made on the system must conserve the local particle number. This restriction on unitary operations implies a restriction also on the Stochastic Local Operations and Classical Communication (SLOCC) \cite{bennett}. Operationally SLOCC is the operations where each party can entangle their modes with a local ancillary system, and subsequently perform a projective measurement on the ancillary system. This is then followed by post selection based on classical communication of the measurement outcomes. The SLOCC satisfying the particle conservation super-selection rule are described by the action of a subgroup $G_{{\mbox{\tiny\itshape SLOCC}}}$ of the special linear group (SL). Locally on each spatial mode the action of $G_{{\mbox{\tiny\itshape SLOCC}}}$ is described by matrices $g$ of the form

\begin{eqnarray}g=\left( \begin{array}{ccc}
\!\! c_{\uparrow\uparrow} & \! c_{\uparrow\downarrow} & 0 \\
\!\! c_{\downarrow\uparrow} & c_{\downarrow\downarrow} & 0 \\
\!\! 0 &\! 0 & c_{00}\end{array} \!\right),\phantom{uu}det(g)=1,
\end{eqnarray}
where $c_{ij}=\langle i|g|j\rangle$ for a the basis $|\!\uparrow\rangle,|\!\downarrow\rangle,|0\rangle$ of the local Hilbert space. The particle conservation super-selection rule is expressed in the block diagonal structure of these matrices, where the $2\times 2$ block corresponds to a particle in the mode and the other block corresponds to the absence of a particle. The group $G_{{\mbox{\tiny\itshape SLOCC}}}$ is a Lie-group \cite{hall} of determinant one matrices. Any element in such a group can be expressed as the matrix exponential of a traceless matrix. The generators of the Lie group is a set of traceless matrices such that any element of the group can be expressed as a matrix exponential of a linear combination of the generators. 

On a given spatial mode, the local action of the group $G_{{\mbox{\tiny\itshape SLOCC}}}$ is generated by the four Gell-Mann matrices \cite{ramond}

\begin{align}\lambda_1=&\left( \begin{array}{ccc}
\! 0 &\,  1 &\, 0 \\
\! 1 &\,  0 &\, 0 \\
\! 0 &\,  0 &\, 0\end{array} \!\right), &\lambda_2=\left( \begin{array}{ccc}
\! 0 &\!  -i & 0 \\
\! i &\!  0 & 0 \\
\! 0 &\! 0 & 0\end{array} \!\right),\nonumber\\
\lambda_3=&\left( \begin{array}{ccc}
\! 1 & \!\! 0 &\! 0 \\
\! 0 & \!\! -1 &\! 0 \\
\! 0 &\!\! 0 &\! 0\end{array} \!\right),  &\lambda_8=\left( \begin{array}{ccc}
\! 1 &\,  0 &\! 0\! \\
\! 0 &\,  1 &\! 0\! \\
\! 0 &\, 0 &\!\! -2\!\end{array} \!\right).
\end{align}
Any element of $G_{{\mbox{\tiny\itshape SLOCC}}}$ can thus be expressed as $e^{\alpha\lambda_1+\beta\lambda_2+\gamma\lambda_3+\delta\lambda_8}\times e^{\epsilon\lambda_1+\zeta\lambda_2+\eta\lambda_3+\theta\lambda_8}\times e^{\kappa\lambda_1+\mu\lambda_2+\nu\lambda_3+\xi\lambda_8}$ for some $\alpha,\dots,\xi \in \mathbb{C}$. 

We can consider the subgroup $G_{123}$ with local action generated by $\lambda_1,\lambda_2,\lambda_3$, i.e., only acting on the local subspace spanned by $|\!\uparrow\rangle$ and $|\!\downarrow\rangle$,  and the subgroup $G_8$ generated by $\lambda_8$. The group $G_8$ contains the only operations involving both $|0\rangle$ and the internal degrees of freedom which are allowed by the super-selection rule. It is clear from the form of the generators that $G_{123}$ and $G_8$ commute. 
Furthermore, the group $G_{{\mbox{\tiny\itshape SU}}}$ of local special unitary operations satisfying the super-selection rule is the subgroup of $G_{{\mbox{\tiny\itshape SLOCC}}}$ with elements satisfying $gg^{\dagger}=I$.

Note that if one of the particles is localized with a party, e.g., party A, the local unitary operations or SLOCC can only act on $|\!\uparrow\rangle$ and $|\!\downarrow\rangle$. Therefore the Gell-Mann matrix $\lambda_8$ is not among the generators for the action on the mode of party A. This type of scenario is not considered in the main text and does not affect the conclusions. See the Appendix for a brief discussion.

The particle conservation super-selection rule leads to a simple relation between detectable entanglement and Bell-nonlocality.
A pure entangled state of two delocalized spin-$\frac{1}{2}$ particles has Bell-nonlocal correlations for a given partitioning if and only if it is distinguishable from a separable state.

\begin{lemma}\label{lemm1}
A pure entangled state of two delocalized spin-$\frac{1}{2}$ particles is Bell-local for a partitioning if and only if it is indistinguishable from a separable state.
\end{lemma}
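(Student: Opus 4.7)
The plan is to prove the two directions of the equivalence separately. The forward direction is essentially definitional, while the converse rests on the sector structure imposed by the particle-conservation super-selection rule (SSR) combined with Gisin's theorem.

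First I would handle the forward direction. If $|\psi\rangle$ is indistinguishable from some separable state $\rho_s$ for the given partition, then every joint distribution of outcomes under SSR-respecting local measurements on $|\psi\rangle$ coincides with the one obtained from $\rho_s$. Since the separable state $\rho_s$ carries a manifest local hidden-variable model, namely its classical convex decomposition, no Bell inequality is violated by its correlations, and consequently none is violated by those of $|\psi\rangle$.

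For the converse I would argue by contrapositive, building on the sector decomposition forced by the SSR. Because every SSR-respecting local operator on a party $X$ commutes with the local particle-number operator, it is block-diagonal on the fixed-particle-number subspaces; tensoring over the parties of the partition yields an orthogonal family of product projectors $P_s=\bigotimes_X P_{q_X}^X$ indexed by the local particle-number configurations $s$ summing to $2$. Decomposing $|\psi\rangle=\sum_s c_s|\psi_s\rangle$ along these sectors, any SSR-respecting product observable has the same expectation on $|\psi\rangle$ as on the dephased state $\sigma=\sum_s |c_s|^2|\psi_s\rangle\langle\psi_s|$, so $|\psi\rangle$ is indistinguishable from $\sigma$. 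Since the $|\psi_s\rangle$ are supported in mutually orthogonal sector subspaces, $\sigma$ is separable for the partition if and only if each $|\psi_s\rangle$ is, and the assumption that $|\psi\rangle$ is distinguishable from every separable state therefore forces some sector state $|\psi_{s^*}\rangle$ to be a pure entangled state for the partition. Within sector $s^*$ each party's Hilbert space reduces to a fixed-particle-number subspace on which the SSR imposes no further restriction, so the full local unitary and linear groups are available; then by Gisin's theorem for a bipartition, or its multipartite extension \cite{popescu3} for the tripartition, there exist local measurements on the sector subspaces whose correlations on $|\psi_{s^*}\rangle$ violate some Bell inequality.

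The final step will be to lift this sector Bell violation to $|\psi\rangle$ itself. I would have each party perform the local particle-number measurement together with the Gisin-type measurement conditional on obtaining the outcome $q_X^*$, recording any other particle-number outcome as a null answer; these composite measurements remain SSR-respecting. The joint distribution conditioned on the product event $\{q_X=q_X^*\}_X$ then equals the Bell-violating distribution of $|\psi_{s^*}\rangle$, while the marginal probability of each party's outcome $q_X^*$ equals $\|P_{q_X^*}^X|\psi\rangle\|^2$ and is independent of that party's Bell setting. Any candidate local hidden-variable model for the full distribution could therefore be reweighted, via $\tilde p(\lambda)\propto p(\lambda)\prod_X P(q_X^*|\lambda)$, into a local hidden-variable model for the conditional distribution; since no such model exists for $|\psi_{s^*}\rangle$, none can exist for $|\psi\rangle$. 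The main obstacle will be precisely this lifting step, since post-selection on local events does not in general preserve Bell-locality; the argument here succeeds only because the event selecting sector $s^*$ arises from local observables whose marginals are setting-independent at the quantum level, and verifying that this setting-independence transfers to the hidden-variable level to justify the above reweighting is the technically delicate point of the proof.
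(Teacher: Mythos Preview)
Your approach is correct and takes a genuinely different route from the paper. The paper treats the bipartition and the tripartition by hand: for $A|BC$ it writes $|\psi\rangle=r_1|\!\uparrow\rangle|\theta\rangle+r_2|\!\downarrow\rangle|\xi\rangle+r_3|0\rangle|\phi\rangle$, observes that the one-particle-at-$A$ sector is isomorphic to two localised qubits and invokes CHSH there, and for $A|B|C$ it reduces to the three pairwise cases by noting that in every run exactly one party finds no particle. Your sector-decomposition argument (dephase to $\sigma$, use that $\sigma$ is separable iff every sector state is a product, then apply Gisin inside the offending sector) packages both cases uniformly and, because it is phrased for arbitrary fixed-local-number sectors, would extend verbatim to more modes or more particles. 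The paper's version is quicker for this specific system; yours is more structural and reusable.

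The one step that is not yet closed is the lifting you flag yourself. Your reweighting $\tilde p(\lambda)\propto p(\lambda)\prod_X P(q_X^{*}\mid\lambda)$ is ill-defined as written, because in a generic LHV model $P(q_X^{*}\mid\lambda)$ may depend on party $X$'s setting; setting-independence of the \emph{quantum} marginal does not by itself transfer to the hidden-variable level (this is exactly the detection-loophole issue). What closes the gap here is the perfect constraint among the local particle numbers: any deterministic LHV model reproducing the full statistics must satisfy $n_A(x,\lambda)+n_{BC}(y,\lambda)=2$ for all settings in the bipartite case, or $n_A(x,\lambda)+n_B(y,\lambda)+n_C(z,\lambda)=2$ in the tripartite case, and since each summand depends on at most one setting while the sum is a constant, each $n_X$ is forced to be a function of $\lambda$ alone. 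With this observation the sector label becomes $\lambda$-measurable, your reweighting is well defined, and the conditional LHV model follows. The paper's own proof in fact glosses over exactly this passage from ``the projected state is Bell-nonlocal'' to ``$|\psi\rangle$ is Bell-nonlocal'', so you are not missing anything the paper supplies---you have simply been more explicit about where the work lies.
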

\begin{proof}

For a partitioning A|BC a general state $|\psi\rangle$ can be written as
\begin{eqnarray}|\psi\rangle=r_1|\!\uparrow\rangle|\theta\rangle+r_2|\!\downarrow\rangle|\xi\rangle+r_3|0\rangle|\phi\rangle,\end{eqnarray} 
where $|\theta\rangle,|\xi\rangle,|\phi\rangle\in H_{BC}$. Consider the projection of the state onto a subspace where one particle is localized with A and one with BC. All states in the subspace spanned by $|\theta\rangle$ and $|\xi\rangle$ contains only one particle. Therefore, within this subspace the operations by the party with access to the modes B and C are not restricted by the super-selection rule. 
Given such operations the system is now isomorphic to a system of two localized spin-$\frac{1}{2}$ particles. Therefore, there exist measurements with Bell-nonlocal outcome correlations if the projection of the state onto this subspace is entangled, i.e., if $|\theta\rangle\neq|\phi\rangle$ and $r_1r_2\neq 0$.
If the state is Bell-local it thus follows that $r_1r_2=0$ or $|\theta\rangle=|\xi\rangle$. Such a state is, up to local unitary operations, of the form $s_1|\!\downarrow\rangle|\varphi\rangle+s_2|0\rangle|\phi\rangle$ and is indistinguishable from a separable state $|s_1|^2|\!\downarrow\rangle\langle\downarrow\!|\otimes|\varphi\rangle\langle\varphi|+|s_2|^2|0\rangle\langle 0|\otimes|\phi\rangle\langle\phi|$. 

For a partitioning A|B|C the question of Bell-locality reduces to the question of Bell locality for measurement performed by the pairs of parties AB, BC, and AC. This follows since only two parties, e.g. A and B, can find particles in any given measurement and measurements performed by C are correlated with the outcomes of A and B in a way that can be described by local hidden variables. Thus the state is Bell-local only if it is a superposition of three product states with the particles in different locations. Up to local unitary transformations these states are of the form $r_1|\!\uparrow\uparrow\!{0}\rangle+r_2|\zeta{0}\!\downarrow\rangle +r_3|0\kappa\chi\rangle$. This state is indistinguishable from the separable state $|r_1|^2\rho_1+|r_2|^2\rho_2+|r_3|^2\rho_3$, where $\rho_1=|\!\uparrow\rangle\langle\uparrow\!|\otimes|\!\uparrow\rangle\langle\uparrow\!|\otimes|{0}\rangle\langle 0|$, $\rho_2=|\zeta\rangle\langle\zeta|\otimes|{0}\rangle\langle 0|\otimes|\!\downarrow\rangle\langle\downarrow\!|$, and $\rho_3=|0\rangle\langle{0}|\otimes|\kappa\rangle\langle\kappa|\otimes|\chi\rangle\langle\chi|$.

Finally, a state that is indistinguishable from a separable state is trivially Bell-local.
\end{proof}

The ability to detect entanglement in a pure state is thus equivalent to the ability to detect Bell-nonlocal correlations.

\subsection{Entanglement classification using invariants}

A characterization of the entanglement of a system can be made in terms of quantities and parameters specific to a particular physical realization, but here we will focus on realization independent properties, i.e., intrinsic properties of the system. 
In particular, we consider
the entanglement between different parts of the system and the local unitary symmetries of a system. The significance of the local unitary symmetries is that they identify certain exceptional types of entanglement that often have interesting properties.

A general tool for entanglement classification is the polynomial entanglement invariants \cite{linden,grassl,wootters,popescu2,coffman,sudbery,luquethibon1,Osterlohsiewert2005}. 
A polynomial invariant of $G_{{\mbox{\tiny\itshape SLOCC}}}$ ($G_{{\mbox{\tiny\itshape SU}}}$) is a polynomial in the coefficients of the state vector (in the coefficients and their complex conjugates) which is invariant under the action of $G_{{\mbox{\tiny\itshape SLOCC}}}$ ($G_{{\mbox{\tiny\itshape SU}}}$). 
Two states are said to have the same type of entanglement with respect to $G_{{\mbox{\tiny\itshape SLOCC}}}$ ($G_{{\mbox{\tiny\itshape SU}}}$) if they can be converted into each other by $G_{{\mbox{\tiny\itshape SLOCC}}}$ ($G_{{\mbox{\tiny\itshape SU}}}$). If this is the case any ratio between two $G_{{\mbox{\tiny\itshape SLOCC}}}$ invariants ($G_{{\mbox{\tiny\itshape SU}}}$ invariants) take the same value for the two states.
In this sense the invariants under $G_{{\mbox{\tiny\itshape SLOCC}}}$ ($G_{{\mbox{\tiny\itshape SU}}}$) serve as coordinates on the space of $G_{{\mbox{\tiny\itshape SLOCC}}}$ ($G_{{\mbox{\tiny\itshape SU}}}$) inter-convertibility classes, i.e., different types of entanglement. A subset of the invariants with the property that all invariants can be constructed algebraically from the subset is called a set of generators \cite{olver}. A set of generators is thus sufficient to describe all entanglement types in the system.

Suitably chosen the generators of the invariants correspond to physically relevant properties of the system that depend on the entanglement.
In particular a choice can be made such that the set of states where one or several generators take the value zero correspond to an intrinsic symmetry \cite{johansson14,luna}. This choice also captures exceptional dimensions \cite{carteret} and topologies \cite{johanssonosterloh14} of the local unitary inter-convertibility classes of states. 

A SLOCC invariant can also be used to construct a function that is non-increasing under SLOCC, an {\it entanglement monotone} \cite{vidal,verstraete2003}. If $I$ is a SLOCC invariant of degree $k$, the function $|{I}|^{2/k}$ is an entanglement monotone \cite{verstraete2003}. To understand the meaning of an entanglement monotone we note the two following properties of SLOCC operations. A SLOCC operation, when represented by a determinant one matrix, does not in general preserve the norm of the state vector. If the state vector is normalized the value of $I$ is therefore in general not preserved by a SLOCC operation. Moreover, a SLOCC operation can in general not be performed with unit probability of success. The outcome is a statistical mixture of the desired outcome state and some other state(s), and the desired operation is realized only after postselection to remove the undesired outcomes.

When taking the outcomes of the unsuccessful attempts to perform the operation into account, and properly normalizing the outcome state vectors, the value of $I$ is in general not constant. However it was shown in Ref. \cite{verstraete2003} that the function $|{I}|^{2/k}$ does not increase when averaged over both successful and unsuccessful attempts.
Since this function does not increase under SLOCC it is called an entanglement monotone and quantifies a property of the system that cannot be created by SLOCC. In a system without superselection rules this directly implies that it quantifies a property of the entanglement since all non-entangled states can be created by SLOCC. In this case the monotone thus serves as a measure, of the particular type of entanglement distinguished by the invariant it was constructed from. In the presence of a particle conservation superselection rule the implication that the monotone measures a property of the entanglement is not direct because the set of SLOCC operations is restricted. However, it can be shown that in the present case a nonzero value of an entanglement monotone implies that Bell-nonlocal correlations can be observed across all bi-partitions.

\begin{theorem}\label{t1}
If a state $|\psi\rangle$ is Bell-local over a bipartition, there exist no maximally entangled state to which $|\psi\rangle$ can be converted by SLOCC and no entanglement monotone takes a non-zero value for $|\psi\rangle$. Moreover, local special unitary operations can induce an arbitrary global phase shift of the state vector $|\psi\rangle$. Thus, if an entanglement monotone takes a non-zero value the state is Bell-nonlocal over every bi-partition.
\end{theorem}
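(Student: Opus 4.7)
The plan is to first establish the middle assertion --- that local special unitary operations can induce an arbitrary global phase on a Bell-local state $|\psi\rangle$ --- and then derive the remaining claims by a standard invariant-theoretic argument. I would begin by invoking Lemma~\ref{lemm1}: if $|\psi\rangle$ is Bell-local over a bi-partition, which we may take to be $A|BC$ without loss of generality, then up to local unitaries it has the canonical form $|\psi\rangle = s_1|\!\downarrow\rangle|\varphi\rangle + s_2|0\rangle|\phi\rangle$. The crucial feature is that this form carries no $|\!\uparrow\rangle$ component on mode $A$, leaving that spin vector free to absorb whatever phase a determinant-one local operation might require.

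Next I would exhibit the required element of $G_{{\mbox{\tiny\itshape SU}}}$ explicitly. On mode $A$, in the basis $|\!\uparrow\rangle,|\!\downarrow\rangle,|0\rangle$, consider the diagonal matrix $U_A(\theta) = \mathrm{diag}(e^{-2i\theta}, e^{i\theta}, e^{i\theta})$. It is unitary, has determinant one, and preserves the block structure dictated by the particle-conservation super-selection rule; hence $U_A(\theta)\otimes I_B\otimes I_C \in G_{{\mbox{\tiny\itshape SU}}}$. Acting on the canonical form, it multiplies both surviving components by $e^{i\theta}$, so $U_A(\theta)|\psi\rangle = e^{i\theta}|\psi\rangle$ for arbitrary real $\theta$. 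Conjugation by the local unitary that produced the canonical form is itself in $G_{{\mbox{\tiny\itshape SU}}}$, so the same phase action is realized on the original state $|\psi\rangle$ by an element of $G_{{\mbox{\tiny\itshape SU}}}$.

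With the arbitrary phase available, the remaining claims follow. A homogeneous polynomial $G_{{\mbox{\tiny\itshape SLOCC}}}$ invariant $I$ of degree $k$ is invariant under $G_{{\mbox{\tiny\itshape SU}}} \subset G_{{\mbox{\tiny\itshape SLOCC}}}$ and scales as $I(c|\psi\rangle) = c^k I(|\psi\rangle)$; applying the phase-shift element yields $I(|\psi\rangle) = e^{ik\theta} I(|\psi\rangle)$ for every $\theta$, forcing $I(|\psi\rangle) = 0$. Consequently every entanglement monotone of the form $|I|^{2/k}$ vanishes on $|\psi\rangle$, and since maximally entangled states support a strictly positive value of at least one such monotone while monotones cannot increase under SLOCC, no SLOCC map can send $|\psi\rangle$ to a maximally entangled state. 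The theorem's final sentence is simply the contrapositive. The main delicacy I foresee lies in verifying that the Lemma~\ref{lemm1} canonical form always eliminates one spin vector on $A$ --- a fact already implicit in that Lemma's proof, where in each branch ($r_1 r_2 = 0$ or $|\theta\rangle = |\xi\rangle$) the local unitary on $A$ rotates the surviving particle sector onto a single spin state --- so that the phase-shifting diagonal on $A$ is genuinely available.
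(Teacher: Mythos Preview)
Your proof is correct and follows essentially the paper's approach: both reduce via Lemma~\ref{lemm1} to the canonical form with a single spin component on mode~$A$, then act with a diagonal determinant-one local matrix on that mode (the paper uses $e^{-3\alpha\lambda_3+\alpha\lambda_8}$, you use $\mathrm{diag}(e^{-2i\theta},e^{i\theta},e^{i\theta})$, which is the same Cartan element up to reparametrization and the choice of $|\!\downarrow\rangle$ versus $|\!\uparrow\rangle$). The only difference is organizational: the paper obtains the absence of a maximally entangled state in the orbit directly by taking the parameter real so that the norm is driven to zero and invoking Kempf--Ness, whereas you deduce it from the vanishing of all polynomial invariants forced by the arbitrary phase together with the fact that maximally entangled states carry a nonzero invariant --- the same underlying Kempf--Ness/Verstraete theory, entered from the other side.
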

\begin{proof}

As described in the proof of Lemma \ref{lemm1}, if a state $|\psi\rangle$ is Bell-local over a bipartition A|BC there exist a basis such that
$|\psi\rangle=r_1|\!\uparrow\rangle|\theta\rangle+r_2|0\rangle|\phi\rangle$. For a state of this form the determinant one matrix $e^{-3\alpha\lambda_3+\alpha\lambda_8}\times{I}\times{I}$ changes the norm of the state vector by a factor of $e^{-2\alpha}$. By making $\alpha$ sufficiently large we can bring the norm arbitrarily close to zero. This implies that the set of states which are inter-convertible with $|\psi\rangle$ does not contain any maximally entangled states \cite{kempfness,verstraete2003}.
The form of the state $|\psi\rangle$ also implies that
$e^{2i\alpha}e^{-3i\alpha\lambda_3+i\alpha\lambda_8}\times{I}\times{I}$ is an element of the stabilizer group. This means that under the special unitary operation $e^{-3i\alpha\lambda_3+i\alpha\lambda_8}\times{I}\times{I}$ the state accumulates a phase-factor $e^{-2i\alpha}$.

Vice versa, if there exist an entanglement monotone which takes a non-zero value for a state, the state cannot have the form $|\psi\rangle=|\!\uparrow\rangle|\theta\rangle+|0\rangle|\phi\rangle$ for any bipartition, since no homogeneous polynomial is invariant under $e^{-3\alpha\lambda_3+\alpha\lambda_8}\times{I}\times{I}$. Thus, the state must be Bell-nonlocal over every bipartition.

\end{proof}
Thus a monotone serves as a measure of detectable entanglement, and in particular of the type of detectable entanglement distinguished by the invariant it was constructed from. 


A maximally entangled state is defined as a pure state for which all the single party reduced density matrices are maximally mixed.
If a given state $|\psi\rangle$ and some maximally entangled state can be converted into each other by SLOCC (with a non-zero probability of success), there exist at least one entanglement monotone which takes a non-zero value for $|\psi\rangle$. Furthermore, on the SLOCC inter-convertibility class of $|\psi\rangle$ the monotone takes its maximal value on the maximally entangled states \cite{kempfness,verstraete2003}.
For the states that can not be transformed by SLOCC to a maximally entangled state, a non-zero entanglement monotone still exist if the state can be brought arbitrarily close to a maximally entangled state by SLOCC. 

Bell-nonlocality across all bipartitions is a requirement for entanglement measures as described by Theorem \ref{t1} and therefore also for maximal entanglement.
Maximally entangled states and more generally states for which some entanglement monotone takes a non-zero value are Bell-nonlocal for every bipartition of the system.
However, maximal entanglement or non-zero entanglement monotones do not imply Bell-nonlocality over A|B|C.

Next we provide the motivation why delocalization over three modes is considered.
In a system of two delocalized spin-$\frac{1}{2}$ particles where at most one particle can occupy a spatial mode, SLOCC invariants can only exist if the number of spatial modes is three.

\begin{theorem}
\label{thee}
There exist no entanglement measures or maximally entangled states for two delocalized spin-$\frac{1}{2}$ particles, where at most one particle occupy a spatial mode, if the number of spatial modes is larger than three.
\end{theorem}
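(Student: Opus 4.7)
The plan is to exhibit a single one-parameter subgroup of $G_{{\mbox{\tiny\itshape SLOCC}}}$---namely the uniform global $\lambda_8$ direction---whose action on the entire two-particle sector is a pure scalar whose modulus shrinks to zero when $n>3$. This will put every state in the null cone of $G_{{\mbox{\tiny\itshape SLOCC}}}$, thereby simultaneously killing every polynomial SLOCC invariant (hence every entanglement monotone $|I|^{2/k}$) and, together with a short particle-number count, ruling out maximally entangled states.

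Concretely, I would consider $U_t\equiv\bigotimes_{i=1}^{n} e^{t\lambda_8}$, the same $\lambda_8$ applied on every mode with a common parameter $t$. Each local factor is block-diagonal with determinant $e^{t}\!\cdot\!e^{t}\!\cdot\!e^{-2t}=1$, so it respects the particle-conservation superselection rule and lies in $G_{{\mbox{\tiny\itshape SLOCC}}}$. Since $e^{t\lambda_8}=\operatorname{diag}(e^{t},e^{t},e^{-2t})$, any basis state with exactly two occupied modes and $n-2$ empty modes acquires the factor $e^{2t}\cdot e^{-2t(n-2)}=e^{2t(3-n)}$, independent of which two modes are occupied or of their spin configuration. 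Hence $U_t$ acts on the whole two-particle sector as the single scalar $e^{2t(3-n)}$.

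For $n>3$ this scalar tends to $0$ as $t\to+\infty$, so the $G_{{\mbox{\tiny\itshape SLOCC}}}$-orbit closure of every state $|\psi\rangle$ contains the zero vector. Any homogeneous SLOCC polynomial invariant $I$ of degree $k$ must then satisfy $I(|\psi\rangle)=I(U_t|\psi\rangle)=e^{2tk(3-n)}I(|\psi\rangle)$ for all $t$, which forces $I(|\psi\rangle)=0$ whenever $k>0$; therefore no nontrivial invariant, and hence no nontrivial entanglement monotone, can exist. For the maximally-entangled-states part I would additionally invoke a direct count: a maximally mixed one-mode reduced density matrix has expected particle number $\tfrac{2}{3}$, so the conserved total $\langle\hat N\rangle=2$ requires $\tfrac{2n}{3}=2$, i.e.\ $n=3$. (The same conclusion also follows from the orbit-closure statement via the standard characterization of closed orbits.)

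The only non-routine step is spotting this uniform $\lambda_8$ direction and noticing that its weight on the two-particle sector, $2(3-n)$, changes sign at exactly $n=3$: this is what makes $n=3$ the unique value for which $\lambda_8$ generates a genuine symmetry (the $2\pi/3$ fractional topological phase discussed later in the paper), while for every $n>3$ the same direction instead collapses the whole sector onto the zero vector and obstructs every entanglement measure.
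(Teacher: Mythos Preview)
Your proposal is correct and is essentially the paper's own argument: the paper applies $g^{\times n}$ with $g=\operatorname{diag}(re^{i\phi},re^{i\phi},r^{-2}e^{-2i\phi})$, which is exactly your $e^{t\lambda_8}$ up to the parametrization $re^{i\phi}=e^{t}$, and obtains the same scalar $r^{-2(n-3)}=e^{2t(3-n)}$ on the two-particle sector. Your additional particle-number count ($\langle\hat N\rangle=\tfrac{2n}{3}=2\Rightarrow n=3$) is a nice self-contained alternative for the maximally-entangled part that the paper does not spell out, relying instead implicitly on the Kempf--Ness/Verstraete--Dehaene--De Moor connection between closed orbits and nonvanishing invariants.
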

\begin{proof}
Consider two delocalized particles in a system of $n$ spatial modes and the matrix $g$

\begin{eqnarray}g=\left( \begin{array}{ccc}
\!\! re^{i\phi} & \! 0 & 0 \\
\!\! 0 & re^{i\phi} & 0 \\
\!\! 0 &\! 0 & r^{-2}e^{-2i\phi}\end{array} \!\right).
\end{eqnarray}
The action of the $G_{{\mbox{\tiny\itshape SLOCC}}}$ group element $g^{\times{n}}$ on a basis vector depends only on the number of occupied spatial modes. Therefore this action is a scaling factor $r^{-2(n-3)}e^{-2(n-3)i\phi}$ of the state vector. 

It follows that for $r\neq{1}$ it is impossible to construct polynomials in the state vector coefficients that are invariant under this action unless $n=3$. 
\end{proof}
Thus, a description of the maximally entangled states in a system with three spatial modes is an exhaustive description of maximal entanglement for two delocalized spin-$\frac{1}{2}$ particles.

\subsection{SLOCC invariants}
\label{slocc}
We now describe the invariants of $G_{{\mbox{\tiny\itshape SLOCC}}}$ and how to find them. 
Our method to construct the $G_{{\mbox{\tiny\itshape SLOCC}}}$ invariants is to find the algebra of polynomial $G_{123}$-invariants and then select the subalgebra of $G_8$ invariants. Since $G_{123}$ and $G_8$ commute this subalgebra contains all the $G_{{\mbox{\tiny\itshape SLOCC}}}$ invariants.
Furthermore, $G_{123}$ is isomorphic to $SL(2)^{\times 3}$ and therefore we can use Cayley's Omega Process (see e.g. \cite{olver}) to find the $G_{123}$ invariants. This procedure is described in the Appendix.

It is found that the invariants of $G_{{\mbox{\tiny\itshape SLOCC}}}$ are generated by the two polynomials

\begin{align}\label{i1}
 I_1 =&(m_{\uparrow\uparrow{0}}m_{\downarrow\downarrow{0}}-m_{\uparrow\downarrow{0}}m_{\downarrow\uparrow{0}})\nonumber\\
 \times &(m_{0\uparrow\uparrow}m_{0\downarrow\downarrow}-m_{0\uparrow\downarrow}m_{0\downarrow\uparrow})\nonumber\\
 \times &(m_{\uparrow{0}\uparrow}m_{\downarrow{0}\downarrow}-m_{\uparrow{0}\downarrow}m_{\downarrow{0}\uparrow}),
\end{align}
and
\begin{align}\label{i4}
I_{2}=&m_{\uparrow\uparrow{0}}(m_{{0}\downarrow\downarrow}m_{\downarrow{0}\uparrow}-m_{{0}\downarrow\uparrow}m_{\downarrow{0}\downarrow})\nonumber\\+&m_{\uparrow\downarrow{0}}(m_{0\uparrow\uparrow}m_{\downarrow{0}\downarrow}-m_{{0}\uparrow\downarrow}m_{\downarrow{0}\uparrow})\nonumber\\+&m_{\downarrow\uparrow{0}}(m_{{0}\downarrow\uparrow}m_{\uparrow{0}\downarrow}-m_{{0}\downarrow\downarrow}m_{\uparrow{0}\uparrow})\nonumber\\+&m_{\downarrow\downarrow{0}}(m_{{0}\uparrow\downarrow}m_{\uparrow{0}\uparrow}-m_{0\uparrow\uparrow}m_{\uparrow{0}\downarrow}),
\end{align}
where $m_{ijk}$ is the probability amplitude of $|ijk\rangle$ for $i,j,k\in\{\uparrow,\downarrow,0\}$. The corresponding monotones are $|I_1|^{1/3}$ and $|I_2|^{2/3}$, respectively.
The polynomial $I_1$ is the product of the concurrence polynomials \cite{wootters} corresponding to the three pairs of spatial modes. The individual concurrence polynomials are invariant under $G_{123}$ but not under $G_8$. However, the modulus squared of a concurrence polynomial is a local unitary invariant. We denote these invariants $I_{AB},I_{BC}$, and $I_{AC}$ where

\begin{align}
I_{AB}=&|m_{\downarrow\uparrow{0}}m_{\uparrow\uparrow{0}}-m_{\uparrow\downarrow{0}}m_{\downarrow\uparrow{0}}|^2,\nonumber\\
I_{BC}=&|m_{{0}\downarrow\downarrow}m_{{0}\uparrow\uparrow}-m_{0\uparrow\downarrow}m_{{0}\downarrow\uparrow}|^2,\nonumber\\
I_{AC}=&|m_{\downarrow{0}\downarrow}m_{\uparrow{0}\uparrow}-m_{\uparrow{0}\downarrow}m_{\downarrow{0}\uparrow}|^2.
\end{align}
The invariant $I_{AB}$ takes a non-zero value only if the projection of the state onto a subspace where the particles are localized with A and B is an entangled state of two localized particles. 
This is equivalent to the existence of Bell-nonlocal correlations between A and B as described in the proof of Lemma \ref{lemm1}. 
The invariants $I_{BC}$ and $I_{AC}$ have the same property for B and C, and A and C, respectively.
The concurrence polynomials thus describe the detectable entanglement between parties in a tri-partition of the system.
The invariant $I_1$ thus takes a non-zero value only if each projection of the state, onto a subspace where both particles are localized, is an entangled state of two localized particles. This is equivalent to each pair of parties being able to observe Bell-nonlocal correlations.

Both $I_1$ and $I_2$ measure genuine tri-partite entanglement in the sense that they take zero value if there is any bipartition over which the state is a product. 
Note that $I_1$ is symmetric under the interchange of two parties while $I_2$ is antisymmetric.

\section{The entangled states of two spin-$\frac{1}{2}$ particles and their Bell-nonlocality}
\label{sec3}
We now characterize the the entanglement of the system of two delocalized particles in terms of detectable entanglement, i.e., Bell nonlocality for different partitions and different parties. Since the Hilbert space of the system is spanned by twelve basis vectors the number of real parameters that are needed to describe the set of state vectors is 24. However, the entanglement of a state is invariant under local unitary transformations. Hence, by removing these local degrees of freedom one can more clearly see the degrees of freedom relevant for the entanglement \cite{acin,higushi,acin2}. To facilitate the characterization of the entanglement in the system we therefore construct a parametrized set of states such that any state can be transformed by a local unitary operation to a state in this set.
The set has twelve parameters and is given by

\begin{align}\label{onewee00}
&r_1|\!\uparrow\uparrow\!{0}\rangle+r_2|\!\downarrow\downarrow\!{0}\rangle+r_3|\!\downarrow\!{0}\!\downarrow\rangle+r_4|\!\uparrow\!{0}\!\uparrow\rangle+r_5e^{i\phi}|\!\uparrow\!{0}\!\downarrow\rangle\nonumber\\+&r_6|0\!\downarrow\downarrow\rangle+r_7|0\!\uparrow\uparrow\rangle
+r_8e^{i\varphi}|0\!\downarrow\uparrow\rangle+r_9e^{i\theta}|0\!\uparrow\downarrow\rangle,
\end{align}
where $\phi,\varphi,\theta$ are real numbers and $r_1,\dots,r_9$ are real and non-negative numbers satisfying $\sum_{j=1}^9r_j^2=1$. The expression in Eq. (\ref{onewee00}) can be derived from a general parametrization of a state vector by a straightforward use of $G_{{\mbox{\tiny\itshape SU}}}$ operations to reduce the number of basis vectors from twelve to nine. First two unitary operations on the mode of $A$ and the mode of $B$ respectively to bring the part of the state vector in the subspace where the particles are with $A$ and $B$ to the Schmidt form \cite{schmidt} $r_1|\!\uparrow\uparrow\!{0}\rangle+r_2|\!\downarrow\downarrow\!{0}\rangle$, followed by a unitary operation on the mode of $C$ to bring the part of the state vector in the subspace where the particles are with $A$ and $C$ to the form $r_3|\!\downarrow\!{0}\!\downarrow\rangle+r_4|\!\uparrow\!{0}\!\uparrow\rangle+r_5e^{i\phi}|\!\uparrow\!{0}\!\downarrow\rangle$, i.e., reducing the number of basis vector needed for this subspace by one.
This is then followed by an elimination of two relative phase factors of the remaining basis vectors using the diagonal subgroup of $G_{{\mbox{\tiny\itshape SU}}}$ generated by $\lambda_3$ and $\lambda_8$ on each of the modes.

As described in Sect. \ref{slocc} measurements by two parties, e.g., A and B, can reveal Bell-nonlocality if and only if the concurrence polynomial \cite{wootters} $m_{\uparrow\uparrow{0}}m_{\downarrow\downarrow{0}}-m_{\uparrow\downarrow{0}}m_{\downarrow\uparrow{0}}$ takes a non-zero value. 
For the parametrization in 
Eq. (\ref{onewee00}) this means $r_1r_2\neq 0$. Likewise, Bell-nonlocality for BC and AC requires $r_6r_7-r_8r_9e^{i(\varphi+\theta)}\neq0$, and $r_3r_4\neq0$, respectively. Since these constraints are satisfied only by a low dimensional subset of states, a generic state is nonlocal for each pair of parties.
Moreover, if a state is Bell-local for each of the pairs AB, BC, and AC it is Bell-local for the tri-partition A|B|C as described in the proof of Lemma \ref{lemm1}, and the invariant $I_1$ which is the product of three concurrence polynomials thus takes the value zero.

Next we consider Bell-nonlocality for bi-partitions of the system. If a state is Bell-nonlocal for measurements performed by A and B it is trivially Bell-nonlocal for the bi-partitions A|BC and B|AC. However, Bell-nonlocality across bipartitions does not imply anything about correlations between any pair of parties, e.g. A and B. States that are Bell-local over A|B|C can still be nonlocal over bipartitions. 



\subsection{Entangled states that are Bell-local for three parties}
\label{kk}
We now describe the set of states that are Bell-local for the tripartition A|B|C.
This set can be further divided into subsets based on for which bipartitions entanglement can be detected. The presence of detectable entanglement, i.e., Bell-nonlocality, across a given bipartition, e.g. A|BC, given that the tripartiton A|B|C is Bell-local, is described by a non-zero value of a particular unitary invariant $I_{A|BC}$. The three polynomial invariants of this type are 
 
\begin{align}
I_{A|BC}=&\nonumber\\=|m_{\downarrow\uparrow{0}}&m_{\uparrow{0}\uparrow}-m_{\uparrow\uparrow{0}}m_{\downarrow{0}\uparrow}|^2+|m_{\downarrow\uparrow{0}}m_{\uparrow{0}\downarrow}-m_{\uparrow\uparrow{0}}m_{\downarrow{0}\downarrow}|^2\nonumber\\+|m_{\downarrow\downarrow{0}}&m_{\uparrow{0}\uparrow}-m_{\uparrow\downarrow{0}}m_{\downarrow{0}\uparrow}|^2+|m_{\downarrow\downarrow{0}}m_{\uparrow{0}\downarrow}-m_{\uparrow\downarrow{0}}m_{\downarrow{0}\downarrow}|^2,\nonumber\\
I_{C|AB}=&\nonumber\\=|m_{{0}\uparrow\downarrow}&m_{\uparrow{0}\uparrow}-m_{0\uparrow\uparrow}m_{\uparrow{0}\downarrow}|^2+|m_{{0}\uparrow\downarrow}m_{\downarrow{0}\uparrow}-m_{0\uparrow\uparrow}m_{\downarrow{0}\downarrow}|^2\nonumber\\+|m_{{0}\downarrow\downarrow}&m_{\uparrow{0}\uparrow}-m_{{0}\downarrow\uparrow}m_{\uparrow{0}\downarrow}|^2+|m_{{0}\downarrow\downarrow}m_{\downarrow{0}\uparrow}-m_{{0}\downarrow\uparrow}m_{\downarrow{0}\downarrow}|^2,\nonumber\\
I_{B|AC}=&\nonumber\\=|m_{\uparrow\downarrow{0}}&m_{0\uparrow\uparrow}-m_{\uparrow\uparrow{0}}m_{{0}\downarrow\uparrow}|^2+|m_{\uparrow\downarrow{0}}m_{{0}\uparrow\downarrow}-m_{\uparrow\uparrow{0}}m_{{0}\downarrow\downarrow}|^2\nonumber\\+|m_{\downarrow\downarrow{0}}&m_{0\uparrow\uparrow}-m_{\downarrow\uparrow{0}}m_{{0}\downarrow\uparrow}|^2+|m_{\downarrow\downarrow{0}}m_{{0}\uparrow\downarrow}-m_{\downarrow\uparrow{0}}m_{{0}\downarrow\downarrow}|^2.\nonumber\\
\end{align}
The construction of these invariants is described in the Appendix.

The states for which no bipartition feature nonlocal correlations are equivalent up to local unitary operations to a state of the form

\begin{eqnarray}\label{onewee011}
S_{{\mbox{\tiny\itshape L:A|BC,B|AC,C|AB}}}=r_1|\!\downarrow\downarrow\!{0}\rangle+r_2|\!\downarrow\!{0}\!\downarrow\rangle+r_3|0\!\downarrow\downarrow\rangle,
\end{eqnarray} 
where $r_1,r_2,r_3$ are real and non-negative satisfying $\sum_{j=1}^3r_j^2=1$.
These are simply the states where the particles are delocalized but no entanglement exist between the internal degrees of freedom of the different spatial modes or between internal degrees of freedom and spatial location. 
The invariants are accordingly $I_1=I_2=I_{A|BC}=I_{B|AC}=I_{C|AB}=0$ for this class of states.

Next we consider the states where only the bipartition A|BC features nonlocal correlations. For these states the invariant $I_{A|BC}\neq 0$ while $I_{B|AC}=I_{C|AB}=0$. Up to local unitary operations such a state is of the form

\begin{eqnarray}\label{onewee012}
S^{ {\mbox{\tiny\itshape NL:A|BC}}}_ {{\mbox{\tiny\itshape L:B|AC,C|AB}}}=r_1|\!\downarrow\downarrow\!{0}\rangle+r_2|\!\uparrow\!{0}\!\uparrow\rangle+r_3|0\!\downarrow\uparrow\rangle,
\end{eqnarray}
where $r_1,r_2>0$, $r_3\geq{0}$, and $\sum_{j=1}^3r_j^2=1$. 
A simple example of a state in this class is two maximally entangled photons where one of the photons have passed through a spin polarizing beam splitter $\frac{1}{\sqrt{2}}(|\!\downarrow\downarrow\!{0}\rangle+|\!\uparrow\!{0}\!\uparrow\rangle)$. 
States with nonlocal correlations across only B|AC or C|AB can be obtained from Eq. (\ref{onewee012}) by permuting the spatial modes.

States which are Bell-nonlocal over A|BC and C|AB but not over B|AC are up to local unitary operations equivalent to
\begin{align}\label{onewee013}
S^{{\mbox{\tiny\itshape NL:A|BC,C|AB}}}_{{\mbox{\tiny\itshape L:B|AC}}}=r_1|\!\downarrow\downarrow\!{0}\rangle+r_2|\!\uparrow\!{0}\!\uparrow\rangle+r_3|0\!\downarrow\downarrow\rangle
+r_4e^{i\phi}|0\!\downarrow\uparrow\rangle,
\end{align}
with the constraints $r_1,r_2,r_3>0$, $r_4\geq 0$, and $\sum_{j=1}^4r_j^2=1$. For this class of states $I_{A|BC}=r_1^2r_2^2$ and $I_{C|AB}=r_2^2r_3^2$ while $I_{B|AC}=0$. 
The states with nonlocal correlations over A|BC and B|AC or B|AC and C|AB can be obtained by permuting the spatial modes.

Finally we consider the states where all bi-partitions are Bell-nonlocal. For any state that is Bell-local over the tri-partition A|B|C but nonlocal over each bipartition A|BC, B|AC and C|AB, the invariant $I_2\neq0$. Any such state belongs to the set

\begin{align}\label{onewee014}
S^{{\mbox{\tiny\itshape NL:A|BC,B|AC,C|AB}}}=&
r_1|\!\downarrow\downarrow\!{0}\rangle+r_2|\!\uparrow\!{0}\!\uparrow\rangle+r_3|0\!\downarrow\downarrow\rangle+r_4|0\!\uparrow\uparrow\rangle\nonumber\\
+&r_5e^{i\theta}|0\!\uparrow\downarrow\rangle+\frac{r_3r_4}{r_5}e^{-i\theta}|0\!\downarrow\uparrow\rangle,
\end{align}
where $r_1,r_2,r_5>0$, $r_3,r_4\geq0$, and $r_1^2+r_2^2+r_3^2+r_4^2+r_5^2+(r_3r_4/r_5)^2=1$. For this set of states $I_2=r_1r_2r_5e^{i\theta}$.

Note that since the polynomials $I_{A|BC}$, $I_{B|AC}$, and $I_{C|AB}$ are not invariant under SLOCC they cannot be used to construct measures for their corresponding types of entanglement. By Theorem (\ref{t1}) no such measures exist for the states in families of Eq. (\ref{onewee011}), (\ref{onewee012}), and (\ref{onewee013}).

\subsection{The intrinsic symmetry of a generic entangled state}

The delocalization of the two spin-$\frac{1}{2}$ particles gives rise to a local unitary symmetry of the system which is qualitatively different from symmetries of localized spin-$\frac{1}{2}$ particles. To see this we consider the intrinsic symmetry of a generic entangled state, i.e., any state in the family in Eq. (\ref{onewee00}) except for some lower dimensional subsets of states. These lower dimensional subsets, with a higher degree of symmetry, correspond to one or more constraints on the parameters $\phi,\theta,\varphi$ and $r_1,\dots,r_9$ in Eq. (\ref{onewee00}).
The generic intrinsic symmetry is described by the stabilizer group with elements

\begin{align}\label{jule1}e^{i\pi m\lambda_3+i\alpha\lambda_8}
\times {e^{i(\alpha+\frac{\pi}{3}m)\lambda_8}}
\times {e^{i(\alpha-\frac{\pi}{3}m)\lambda_8}},\end{align}
where $\alpha$ is a real valued parameter and $m$ is an integer. Note that the relation $e^{\frac{2\pi i}{3}\lambda_8}=e^{\frac{2\pi i}{3}}I$ leads to an ambiguity in the representation of the group elements in  Eq. (\ref{jule1}). 

The existence of a generic non-trivial symmetry can be contrasted with the case of three or more localized spin-$\frac{1}{2}$ particles where no non-trivial generic symmetry exist \cite{popescu2}. A bipatite system of localized spin-$\frac{1}{2}$ particles on the other hand has a generic non-trivial symmetry. However, the symmetry in Eq. (\ref{jule1}) is not related to any symmetry of localized particles, but is a consequence of the delocalization. This can be seen from the fact that the only element in the stabilizer group in Eq. (\ref{jule1}) which does not involve $\lambda_8$ is the identity element.

An example of a subset of states with a non-generic symmetry is the set $S^{{\mbox{\tiny\itshape NL:A|BC,C|AB}}}_{{\mbox{\tiny\itshape L:B|AC}}}$ in Eq. (\ref{onewee013}). These are the states which are Bell local for A|B|C and Bell-local over B|AC but nonlocal for the other bi-partitions. The generic symmetry of this subset is described by the larger stabilizer group

\begin{eqnarray}\label{sym2}e^{i\alpha}e^{i\beta\lambda_{3}+i\gamma\lambda_8}
\times{e^{i\frac{3}{2}\alpha\lambda_{3}+i(\frac{1}{3}\beta+\gamma+\frac{1}{2}\alpha)\lambda_8}}
\times{e^{+i(-\frac{1}{3}\beta+\gamma)\lambda_8}},\nonumber\\\end{eqnarray}
where $\alpha,\beta,\gamma$ are real valued parameters.
This specific configuration of Bell-locality and nonlocality for the different partitions thus coincides with a higher degree of symmetry. Similarly, the other non-generic Bell-nonlocality configurations in Eq. (\ref{onewee011}), (\ref{onewee012}), and (\ref{onewee014}) correspond to larger symmetry groups.

A symmetry of the type in Eq. (\ref{sym2}) which includes a phase factor $e^{i\alpha}$ implies that local special unitary operations can induce an arbitrary phase shift of the state vector. For a state with the generic symmetry on the other hand, local special unitary operations can only induce a discrete set $\{\pm 1,\pm e^{i\frac{2\pi}{3}}\}$ of phase factors. 
This is an example of how the Bell-locality across a bipartition is reflected in the symmetry properties of the state as described in Theorem (\ref{t1}). Furthermore, the discrete set of phases is related to the invariant $I_2$ since it is a polynomial of degree 3 and therefore only invariant under global phase shifts which are multiples of $2\pi/3$. 

When local unitary evolution of a state can only induce a discrete set of phase shifts these are called topological phases \cite{milmanmosseri,oxmankhoury,johansson12}. The name comes from the relation to the topology of inter-convertibility classes of states under local special unitary operations.
These topological phases can be observed as an interference effect in systems undergoing special unitary evolution \cite{souza,johanssonkhoury13}. 

For any number of localized spin-$\frac{1}{2}$ particles the topological phases of a generic state are only $\pm\pi$. 
Any other set of topological phases of the form $\{\frac{m\pi}{k}\}$ for some given integer $k\neq{1}$ and integers $m$ exist only in lower dimensional subsets of states \cite{johansson12}. The generic fractional topological phase $2\pi/3$ is thus a qualitative feature related to the generic symmetry, and caused by the delocalization of the particles, which never appears in systems of localized spin-$\frac{1}{2}$ particles.

\section{The maximally entangled states}
\label{sec4}
The set of maximally entangled states can be described up to local unitary operations by the two families of states, $S_{1}(r)$ and $S_{2}(r,\theta)$. These can be constructed from the states in Eq. (\ref{onewee00}) by imposing the conditions that all reduced single party density matrices are maximally mixed. The first family $S_{1}(r)$ can be understood as an equal superposition of  three different maximally entangled states of two localized particles on the pairs of spatial modes A and B, A and C, and B and C respectively. The family $S_{1}(r)$ can be expressed as

\begin{align}\label{onewee}
S_{1}(r)=&\frac{1}{\sqrt{6}}(|\!\uparrow\uparrow\!{0}\rangle+|\!\downarrow\downarrow\!{0}\rangle+|\!\downarrow\!{0}\!\downarrow\rangle+|\!\uparrow\!{0}\!\uparrow\rangle)\nonumber\\+&\sqrt{1/6-r^2}\left(|0\!\downarrow\downarrow\rangle+|0\!\uparrow\uparrow\rangle\right)\nonumber\\
+&r(|0\!\downarrow\uparrow\rangle-|0\!\uparrow\downarrow\rangle),
\end{align}
where $0\leq r\leq{1/\sqrt{6}}$. 
If a state in this class is projected onto a subspace where the two particles are localized in a pair of spatial modes, e.g. A and B, the resulting state is a maximally entangled state of two localized particles. 

With this parametrization $I_1=6^{-3}$ and $I_2=r/3$.
The invariant $I_1$, is constant since it is the product of the three concurrence polynomials and each such polynomial is a measure of the bipartite entanglement of two localized particles. For the parametrization in Eq. (\ref{onewee}) the invariant $I_2$ is proportional to the amplitude of the singlet state $|0\!\downarrow\uparrow\rangle-|0\!\uparrow\downarrow\rangle$ for the spatial modes B and C. A maximally entangled state of two localized particles can be transformed into any other such maximally entangled state by local unitary transformations, but the three pairs of superposed such states in Eq. (\ref{onewee}) cannot be transformed independently of each other. The degree to which the three superposed entangled particle-pairs can be simultaneously transformed to the same bipartite state on their respective pair of spatial modes is described by the parameter $r$.  
This "relative orientation" of the states of the three entangled pairs, and the corresponding bi-partite correlations, is quantified by the invariant $I_2$. The three pairs can be oriented in the same way only if $I_2=0$.

The second family $S_{2}(r,\theta)$ is given by

\begin{align}\label{onewee2}
S_{2}(r,\theta)=&r(|\!\uparrow\uparrow\!{0}\rangle+|\!\downarrow\!{0}\!\downarrow\rangle+e^{i\theta}|0\!\downarrow\uparrow\rangle)\nonumber\\
+&\sqrt{\frac{1}{3}-r^2}(|\!\downarrow\downarrow\!{0}\rangle+|\!\uparrow\!{0}\!\uparrow\rangle +|0\!\uparrow\downarrow\rangle),
\end{align}
where $0\leq r<{1/\sqrt{3}}$ and $0\leq\theta\leq 2\pi$. The two invariants take the values  
$I_1=r^3(1/3-r^2)^{3/2}e^{i\theta}$ and $I_2=r^3e^{i\theta}-(1/3-r^2)^{3/2}$, respectively.
For $0<r<{1/\sqrt{3}}$ the states in this family can be described as super-positions of non-maximally entangled pairs of localized particles. As opposed to states of the first family it is not maximal entanglement of these pairs but the precise relation between the states of the pairs that makes the state in Eq. (\ref{onewee2}) maximally entangled. Moreover, for $r=0$ the state cannot anymore be described in terms of bipartite entanglements between localized particles.
This value of $r$ correspond to the unique maximally entangled state, up to local unitary transformations, which is Bell local for A|B|C. This is also the only maximally entangled state for which $I_1=0$. 

The zero points of the two invariants $I_1$ and $I_2$ on the set of maximally entangled states corresponds to two states with special entanglement properties and a high degree of symmetry. We consider these two states in more detail.

The maximally entangled state distinguished by $I_2=0$ in the first family is the state for which all the superposed maximally entangled pairs of two localized particles can be given the same orientation.

\begin{eqnarray}\label{twowee}
|\psi_{1}\rangle\!=\!\frac{1}{\sqrt{6}}(|\!\uparrow\uparrow\!{0}\rangle\!+\!|\!\downarrow\downarrow\!{0}\rangle\!+\!|\!\downarrow\!{0}\!\downarrow\rangle\!+\!|\!\uparrow\!{0}\!\uparrow\rangle\!+\!|0\!\downarrow\downarrow\rangle\!+\!|0\!\uparrow\uparrow\rangle).\nonumber\\
\end{eqnarray}
This is the only state that can be understood as the delocalization of a maximally entangled pair of two spin-$\frac{1}{2}$ particles where the spatial degree of freedom has not been entangled with the orientation of the pair.
The state has a high degree of symmetry described by the non-Abelian stabilizer group defined by the elements 
\begin{align}\label{jule2}&e^{i\frac{\pi}{3}(k+l+m+n+p+q)}e^{i\frac{\pi}{2}[k-l-m+n+p+q]\lambda_3+i\alpha\lambda_8}\nonumber\\
&\times {e^{i\frac{\pi}{2}[k-l+m-n-p+q]\lambda_{3}+i[\alpha+\frac{\pi}{3}(-m-n+p+q)]\lambda_8}}\nonumber\\
&\times {e^{i\frac{\pi}{2}[-k+l-m+n-p+q]\lambda_{3}+i[\alpha+\frac{\pi}{3}(-k-l+p+q)]\lambda_8}},\nonumber\\
&e^{i{\pi}}e^{i\frac{\pi}{2}\lambda_1}
\times{e^{i\frac{\pi}{2}\lambda_1}}
\times{e^{i\frac{\pi}{2}\lambda_1}},\nonumber\\
\phantom{o}\mathrm{and,}\phantom{o} 
&e^{i\beta\lambda_2}
\times{e^{i\beta\lambda_2}}
\times{e^{i\beta\lambda_2}},\end{align}
where $\alpha,\beta$ are real numbers and $k,l,m,n,p,q$ are integers.
This symmetry is distinguished from all other symmetries by a fractional topological phase $\pi/3$ instead of the generic $2\pi/3$.

In the second family we can consider the unique maximally entangled state which is Bell local over the tri-partition A|B|C, i.e., for which $I_1=0$.

\begin{eqnarray}\label{trewee}
|\psi_{2}\rangle=\frac{1}{\sqrt{3}}(|\!\uparrow\uparrow\!{0}\rangle+|\!\downarrow\!{0}\!\downarrow\rangle+|0\!\downarrow\uparrow\rangle).
\end{eqnarray}
This state has an intrinsic symmetry described by the local unitary stabilizer group with elements

\begin{align}\label{jule}{e^{i\alpha\lambda_3+i\beta\lambda_8}}&\times{e^{i\gamma\lambda_{3}
+\frac{1}{3}i(3\beta-\alpha+\gamma+\delta)\lambda_8}}\nonumber\\
&\times{e^{i\delta\lambda_{3}+\frac{1}{3}i(3\beta+\alpha+2\gamma+\delta)\lambda_8}},\end{align}
where $\alpha,\beta,\gamma,\delta$ are a real valued parameters. 
Note that the trivial factors $e^{\frac{2i\pi}{3}}e^{-\frac{2i\pi}{3}\lambda_8}=I$ have been removed.

In conclusion, we can see that maximally entangled states of two delocalized spin-$\frac{1}{2}$ particles are in general not simply entangled pairs of particles that have been delocalized. Only one maximally entangled state, up to local unitary operations, can be described this way. In general the spin degrees of freedom are entangled with the spatial location in more subtle ways, and in the extreme case this leads to a maximally entangled state that is Bell-local for three parties performing local measurements.

\section{Generalization to more than two particles and to higher spin}
\label{gen}
As described in Theorem \ref{thee} maximally entangled states for two delocalized spin-$\frac{1}{2}$ particles only exists when there is three spatial modes if at most one particle an occupy any given mode. This is the most trivial example of a more general condition for existence of maximally entangled states in terms of a relation between the number of particles, their internal degrees of freedom, and the number of spatial modes.
A system of delocalized spin-$\frac{p}{2}$ particles admits maximally entangled states only for a ratio between the number of particles and the number of modes which is given by the particle type.

\begin{theorem}
\label{thee2}
In a system of $n$ spatial modes and $m$ delocalized spin-$\frac{p}{2}$ particles, where at most one particle occupy any spatial mode, entanglement measures and maximally entangled states exist if and only if $m/n=(p+1)/(p+2)$.
\end{theorem}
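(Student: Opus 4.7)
The plan is to separate the two directions. For necessity, I would first give a short argument specific to maximally entangled states: if $|\psi\rangle$ is maximally entangled, every single-mode reduced density matrix equals $I/(p+2)$ on the local Hilbert space (spanned by the $p+1$ spin states and $|0\rangle$). The expectation of the local occupation operator $I-|0\rangle\langle 0|$ is then $(p+1)/(p+2)$, and summing over the $n$ modes while invoking global particle conservation yields $m=n(p+1)/(p+2)$. For the stronger claim that no entanglement monotone exists when the ratio fails, I would generalize the scaling argument of Theorem~\ref{thee}. Consider the local diagonal matrix $g=\mathrm{diag}(\alpha,\dots,\alpha,\alpha^{-(p+1)})$ with $\alpha$ repeated $p+1$ times in the particle block; it has unit determinant and respects the super-selection block structure, so it belongs to $G_{{\mbox{\tiny\itshape SLOCC}}}$. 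Its tensor action $g^{\otimes n}$ scales every $m$-particle basis vector by the common factor $\alpha^{(p+2)m-(p+1)n}$, which is non-trivial for real $\alpha>0$ whenever $(p+2)m\neq(p+1)n$. Sending $\alpha$ to $0$ or $\infty$ drives the norm to zero, so by the Kempf--Ness reasoning used in Theorem~\ref{t1}, no closed $G_{{\mbox{\tiny\itshape SLOCC}}}$-orbit exists away from the origin, excluding both polynomial invariants and maximally entangled states.

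For sufficiency I would exhibit an explicit maximally entangled state whenever $(p+2)m=(p+1)n$. I would first handle the minimal case $(m,n)=(p+1,p+2)$ via a Latin-square construction: pick any Latin square $L$ of order $p+2$ with symbols $\{0,1,\dots,p+1\}$, identifying $0$ with $|0\rangle$ and the remaining symbols with the spin basis states. Each row of $L$ contains exactly one $0$, hence $p+1$ particles, so it labels a valid $m$-particle basis vector. Define
\begin{equation*}
|\psi\rangle=\frac{1}{\sqrt{p+2}}\sum_{k=1}^{p+2}|L(k,1)\,L(k,2)\cdots L(k,p+2)\rangle.
\end{equation*}
Column-wise completeness of $L$ makes the diagonal of each single-mode reduced density matrix uniform at $1/(p+2)$. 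For $s\neq s'$ the unique rows $k,k'$ with $L(k,j)=s$ and $L(k',j)=s'$ must differ in every column other than $j$ as well, otherwise some column of $L$ would carry a repeated symbol; hence all off-diagonals vanish and $\rho_j=I/(p+2)$ for every mode $j$. For a general admissible pair $(m,n)=\kappa(p+1,p+2)$, the tensor product of $\kappa$ disjoint copies of this state is still pure, has every one-mode reduction equal to $I/(p+2)$, and is therefore maximally entangled. The existence of such a state implies a closed $G_{{\mbox{\tiny\itshape SLOCC}}}$-orbit bounded away from zero and, by the standard Kempf--Ness correspondence, a non-constant homogeneous polynomial invariant $I$; the $|I|^{2/k}$ recipe of Ref.~\cite{verstraete2003} then produces an entanglement monotone.

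The necessity half is essentially a straight generalization of Theorem~\ref{thee} together with simple particle-number accounting, so I expect the real work to lie on the sufficiency side. The main obstacle there is producing the maximally entangled state in the minimal case, and the Latin-square construction is the cleanest route I see: it simultaneously enforces uniform diagonals (from the column-wise completeness) and off-diagonal cancellation (from the fact that distinct rows of a Latin square disagree in every column). Since Latin squares of every order exist, this handles all admissible ratios $m/n=(p+1)/(p+2)$ in one stroke.
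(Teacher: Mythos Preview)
Your proof is correct and tracks the paper's argument closely, with a few embellishments worth noting. For necessity, your scaling argument with $g=\mathrm{diag}(\alpha,\dots,\alpha,\alpha^{-(p+1)})$ is exactly the paper's; your additional particle-counting observation (that maximally mixed one-mode marginals force the expected local occupation to be $(p+1)/(p+2)$, hence $m=n(p+1)/(p+2)$) is a pleasant and more elementary shortcut for the maximally-entangled part that the paper does not give. For sufficiency, the paper builds the minimal-case state from cyclic shifts of the string $0,1,\dots,p+1$, which is precisely the circulant Latin square; your construction therefore strictly generalises the paper's (any Latin square works, and your orthogonality argument via ``distinct rows of a Latin square differ in every column'' is clean). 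For the non-minimal case the paper concatenates $\kappa$ copies of the string and still sums only $p+2$ cyclic shifts, whereas you take a tensor product of $\kappa$ disjoint blocks; both yield maximally mixed one-mode marginals, so either route suffices. Finally, you make explicit the Kempf--Ness step from ``maximally entangled state exists'' to ``nontrivial homogeneous $G_{{\mbox{\tiny\itshape SLOCC}}}$-invariant exists''; the paper leaves this implicit, so your version is slightly more complete on that point. The only thing to be mindful of is that Kempf--Ness requires the acting group to be reductive; here each local factor of $G_{{\mbox{\tiny\itshape SLOCC}}}$ is isomorphic to $GL(p+1,\mathbb{C})$, which is reductive, so the invocation is legitimate.
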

\begin{proof}
To show necessity of the relation $m/n=(p+1)/(p+2)$ we construct the generalization of the proof of Theorem \ref{thee}.
Consider $m$ delocalized spin-$\frac{p}{2}$ particles in a system of $n$ spatial modes and the diagonal $(p+2)\times(p+2)$ matrix $g$ acting on the local Hilbert space of some spatial mode

\begin{eqnarray}g=\left( \begin{array}{cccccc}
\!\! re^{i\phi} & \! 0 & 0 & \cdots & 0 & 0\\
\!\! 0 & re^{i\phi} & 0 & \cdots & 0 & 0\\
\!\! 0 & 0 &  \ddots &  & \vdots & \vdots \\
\!\! \vdots & \vdots & & \ddots &  0 & 0\\
\!\! 0 &  0   & \cdots & 0 & re^{i\phi} & 0\\
\!\! 0 & 0  & \cdots & 0 & 0 & r^{-(p+1)}e^{-(p+1)i\phi}\end{array} \!\right).\nonumber
\end{eqnarray}

The matrix is an element of $SL(p+2)$ and is a generalization of the matrix in the proof of Theorem \ref{thee}. The $p+1$ first columns of $g$ act on the internal degrees of freedom of the particle and the last column acts on $|0\rangle$. See e.g. Ref. \cite{pff} for a description of the generators of general $SU$ and $SL$ groups.

The operation $g^{\times{n}}$ belongs to the $G_{{\mbox{\tiny\itshape SLOCC}}}$ group for spin-$\frac{p}{2}$ particles.
The action of $g^{\times{n}}$ on a basis vector depends only on the number of occupied spatial modes. Therefore this action is a scaling factor $r^{(p+2)m-(p+1)n}e^{[(p+2)m-(p+1)n]i\phi}$ of the state vector. 

It follows that for $r\neq{1}$ it is impossible to construct polynomials in the state vector coefficients that are invariant under this action unless $m/n=(p+1)/(p+2)$. 

For the sufficiency of of the relation $m/n=(p+1)/(p+2)$ we make an explicit construction of maximally entangled states for arbitrary $n,m,p$ satisfying the relation.
For a given $p$, each particle has $p+1$ internal degrees of freedom. Consider the sequence $0,1,2,\dots,p-1,p,p+1$, where $0$ represents the absence of a particle and $1,\dots,p+1$ represents the internal degrees of freedom. Let $\sigma$ be the cyclic permutation defined by
$\sigma(0,1,2,\dots,p-1,p,p+1)=p+1,0,1,2,\dots,p-1,p$. Furthermore let $\sigma^{k}$ denote the composition of $k$ such cyclic permutations.

Each sequence corresponds to a Hilbert space vector which is a product of local states as given by the sequence.
Given this we can define a state $|\Psi_{\sigma}^{p+1}\rangle$ by
\begin{eqnarray}
|\Psi_{\sigma}^{p+1}\rangle=\frac{1}{\sqrt{p+2}}\sum_{i=1}^{p+2}|\sigma^{i}(0,1,2,\dots,p-1,p,p+1)\rangle,\nonumber
\end{eqnarray}
which is a maximally entangled state of $p+1$ particles with spin $\frac{p}{2}$ delocalized over $p+2$ spatial modes. To verify this is a straightforward exercise of calculating the one mode reduced density matrices and see that they are all maximally mixed.

To construct states with $r(p+1)$ particles with spin $\frac{p}{2}$ delocalized over $r(p+2)$ spatial modes we can compose a sequence $s(r,p+1)$ of $r$ copies of $0,1,2,\dots,p-1,p,p+1$ joined together $s(r,p+1)\equiv 0,1,2,\dots,p-1,p,p+1,0,1,2,\dots,p-1,p,p+1,0,1,2\dots$ and the state

\begin{eqnarray}
|\Psi_{\sigma}^{r,p+1}\rangle=\frac{1}{\sqrt{p+2}}\sum_{i=1}^{p+2}|\sigma^{i}[s(r,p+1)]\rangle\nonumber
\end{eqnarray}

In this way we can construct maximally entangled states for any $n,m,p$ satisfying $m/n=(p+1)/(p+2)$.
\end{proof}
From Theorem \ref{thee2} we can conclude that the particle species determines the required ratio of particles to the number of spatial modes for the existence of maximal entanglement.
 
Just as in the case of two particles delocalized over three spatial modes, maximally entangled states and entanglement monotones exist only if Bell-nonlocal correlations can be observed. Theorem \ref{t1} can be generalized to systems of arbitrary numbers of particles.

\begin{theorem}\label{t1g}
Let $|\psi\rangle$ be a state of $m$ delocalized spin-$\frac{p}{2}$ particles in $n$ modes where $m/n=(p+1)/(p+2)$. If $|\psi\rangle$ is Bell-local over a bipartition of the system into a single mode and $n-1$ modes, there exist no maximally entangled state to which $|\psi\rangle$ can be converted by SLOCC and no entanglement monotone takes a non-zero value for $|\psi\rangle$. Moreover, local special unitary operations can induce an arbitrary global phase shift of the state vector $|\psi\rangle$. Thus, if an entanglement monotone takes a non-zero value the state is Bell-nonlocal over every bi-partition of this type.
\end{theorem}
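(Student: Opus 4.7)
The plan is to extend the proof of Theorem \ref{t1} to arbitrary $n,m,p$ with $m/n=(p+1)/(p+2)$ by repeating the same two-step strategy: first show that Bell-locality over a single-mode bipartition forces a restricted algebraic form of $|\psi\rangle$, and then exhibit a one-parameter family of diagonal SL elements on that mode which both rescales the norm to zero and generates an arbitrary global phase. Let $A$ denote the isolated mode and $\bar A$ the remaining $n-1$ modes, so that a general state expands as
\begin{equation}
|\psi\rangle=\sum_{i=1}^{p+1}r_i|i\rangle_A|\theta_i\rangle+r_0|0\rangle_A|\phi\rangle,
\end{equation}
with $|\theta_i\rangle$ in the $(m-1)$-particle sector of $\bar A$ and $|\phi\rangle$ in the $m$-particle sector. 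Inside the one-particle-on-$A$ subspace the total particle number on $\bar A$ is fixed, so the super-selection rule no longer restricts the measurements available to $\bar A$, and the projection $\sum_i r_i|i\rangle_A|\theta_i\rangle$ is operationally equivalent to a pure bipartite state of a $(p+1)$-dimensional system and a higher-dimensional one. By Gisin--Popescu~\cite{gisin,popescu3}, Bell-locality across $A|\bar A$ forces that projection to factor as a product; after a unitary on $A$ aligning the surviving internal state with $|1\rangle$, the state reduces to
\begin{equation}
|\psi\rangle=s_1|1\rangle_A|\varphi\rangle+s_2|0\rangle_A|\phi\rangle.
\end{equation}

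Next I would construct the generalization of $e^{-3\alpha\lambda_3+\alpha\lambda_8}$ adapted to the $(p+2)$-dimensional local space on $A$, namely the diagonal element
\begin{equation}
g_A(\alpha)=\mathrm{diag}\bigl(e^{-\alpha},\,e^{2\alpha/p},\,\dots,\,e^{2\alpha/p},\,e^{-\alpha}\bigr),
\end{equation}
whose first entry acts on $|1\rangle$, whose last entry acts on $|0\rangle$, and whose $p$ middle entries act on the internal states $|2\rangle,\dots,|p+1\rangle$ that do not appear in the reduced form of $|\psi\rangle$. Its diagonal logarithms sum to zero, so $\det g_A(\alpha)=1$ and $g_A(\alpha)\otimes I^{\otimes(n-1)}\in G_{\mbox{\tiny\itshape SLOCC}}$; its action simply scales $|\psi\rangle\mapsto e^{-\alpha}|\psi\rangle$. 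Sending $\alpha\to+\infty$ drives the norm to zero, so by the Kempf--Ness criterion~\cite{kempfness,verstraete2003} the SLOCC orbit closure of $|\psi\rangle$ contains the zero vector, cannot contain a maximally entangled state, and supports no nonvanishing homogeneous polynomial $G_{\mbox{\tiny\itshape SLOCC}}$-invariant, and hence no nonvanishing entanglement monotone. Substituting $\alpha\mapsto i\alpha$ turns $g_A(\alpha)$ into a stabilizer element in $G_{\mbox{\tiny\itshape SU}}$ that multiplies $|\psi\rangle$ by the arbitrary phase $e^{-i\alpha}$, establishing the stated freedom of global phase. The contrapositive yields the final claim: any state with a nonzero monotone must be Bell-nonlocal across every partition of the system into a single mode and its $n-1$-mode complement.

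The step I would treat most carefully is the structural reduction in the first paragraph, specifically the invocation of Gisin--Popescu on the projected state inside the one-particle-on-$A$ sector. The subtlety is that the super-selection rule forbids coherence between different total particle numbers on $\bar A$, yet any CHSH-type observable needed to witness entanglement of the projected state commutes with the total $\bar A$ particle-number operator and therefore is allowed, so the argument parallels the bipartite step in the proof of Lemma \ref{lemm1}. A minor degenerate case is $p=0$ (spinless particles, $m/n=1/2$), where the middle block in $g_A(\alpha)$ is empty and the present argument does not apply verbatim; that regime, however, involves no internal degrees of freedom and can be handled separately or simply excluded as being outside the intended scope.
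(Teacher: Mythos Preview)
Your proposal is correct and follows essentially the same approach as the paper: reduce the Bell-local state to the form $s_1|1\rangle|\varphi\rangle+s_2|0\rangle|\phi\rangle$ via the Gisin--Popescu argument on the fixed-particle-number sector, then exhibit a diagonal determinant-one matrix on the single mode that rescales both surviving components equally. The only difference is cosmetic: the paper uses $g(\alpha)=\mathrm{diag}\bigl(e^{-(p+1)\alpha},e^{(p+3)\alpha},e^{\alpha},\dots,e^{\alpha},e^{-(p+1)\alpha}\bigr)$, whereas your choice $\mathrm{diag}\bigl(e^{-\alpha},e^{2\alpha/p},\dots,e^{2\alpha/p},e^{-\alpha}\bigr)$ has a uniform middle block; both have unit determinant and act as a scalar on the reduced state, so the Kempf--Ness and phase arguments go through identically. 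Your explicit flag of the $p=0$ edge case is a nice addition---the paper's matrix also degenerates there but does not comment on it.
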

\begin{proof}

For a partitioning into a single mode and $n-1$ modes, a general state $|\psi\rangle$ can be written as
\begin{eqnarray}|\psi\rangle=\sum_{i=1}^{p+1}r_i|i\rangle|\theta_i\rangle+r_0|0\rangle|\phi\rangle,\end{eqnarray} 
where the $|i\rangle$ and $|0\rangle$ is a basis of the local Hilbert space of the single mode, the $|\theta_i\rangle$ are states of $m-1$ particles in $n-1$ modes and $|\phi\rangle$ is a state of $m$ particles in $n-1$ modes. Consider the projection of the state onto a subspace where one particle is localized in the single-mode part of the partitioning and $m-1$ in the remaining $n-1$ modes. All states in the subspace spanned by the $|\theta_i\rangle$ contains the same number of particles. Therefore, within this subspace the operations by a party with access to the $n-1$ partition are not restricted by the super-selection rule. 

Therefore nonlocal correlations on this subspace can be observed if and only if the projection of the state to this subspace is entangled. 
Thus the state is Bell-nonlocal if $|\theta_i\rangle\neq|\theta_j\rangle$ for some $i$ and $j$ and $r_ir_j\neq 0$.
If the state is Bell-local it thus follows that $r_ir_j=0$ or $|\theta_i\rangle=|\theta_j\rangle$ for each pair of $i,j$. Such a state is, up to local unitary operations, of the form $s_1|1\rangle|\varphi\rangle+s_2|0\rangle|\phi\rangle$ and is indistinguishable from a separable state $|s_1|^2|1\rangle\langle 1|\otimes|\varphi\rangle\langle\varphi|+|s_2|^2|0\rangle\langle 0|\otimes|\phi\rangle\langle\phi|$.



Next consider the diagonal $(p+2)\times(p+2)$ determinant one matrix
\begin{eqnarray}g(\alpha)=\left( \begin{array}{cccccc}
\!\! e^{-(p+1)\alpha} & \! 0 & 0 & \cdots & 0 & 0\\
\!\! 0 & e^{(p+3)\alpha} & 0 & \cdots & 0 & 0\\
\!\! 0 & 0 &  e^{\alpha} &  & \vdots & \vdots \\
\!\! \vdots & \vdots & & \ddots &  0 & 0\\
\!\! 0 &  0   & \cdots & 0 & e^{\alpha} & 0\\
\!\! 0 & 0  & \cdots & 0 & 0 & e^{-(p+1)\alpha}\end{array} \!\right),\nonumber
\end{eqnarray}
where $\alpha\in\mathbb{C}$. Here the first column corresponds to the action on the local state $|1\rangle$ and the last column corresponds to the action on $|0\rangle$.

For a state of the form $s_1|1\rangle|\varphi\rangle+s_2|0\rangle|\phi\rangle$ the SLOCC operation $g(\alpha)\times{I}\dots\times{I}$ changes the norm of the state vector by a factor of $|e^{-(p+1)\alpha}|$. By making $Re(\alpha)$ sufficiently large we can bring the norm arbitrarily close to zero. This implies that the set of states which are inter-convertible with $|\psi\rangle$ does not contain any maximally entangled states \cite{kempfness,verstraete2003}.
The form of the state $|\psi\rangle$ also implies that when $Re(\alpha)=0$ the SLOCC operation is special unitary and under this operation $g(\phi)\times{I}\dots\times{I}$ the state accumulates a phase-factor $e^{-(p+1)\alpha}$.
Thus, $e^{(p+1)\alpha}g(\alpha)\times{I}\dots\times{I}$ is an element of the stabilizer group. 

Vice versa, if there exist an entanglement monotone which takes a non-zero value for a state, the state cannot have the form $|\psi\rangle=s_1|1\rangle|\varphi\rangle+s_2|0\rangle|\phi\rangle$ for any bipartition into a single mode and $n-1$ modes, since no homogeneous polynomial is invariant under $g(\alpha)\times{I}\dots\times{I}$. Thus, the state must be Bell-nonlocal over every such bipartition.

\end{proof}

Theorem \ref{t1g} shows that just as in the case of two spin-$\frac{1}{2}$ particles and three spatial modes the the existence of SLOCC invariants and monotones implies that there exist bipartitions for which the system is Bell-nonlocal.

Theorem \ref{thee2} and Theorem \ref{t1g} gives some general properties of maximal entanglement, SLOCC invariants and monotones, and their connection to Bell-nonlocality in systems of delocalized particles under the constraint that at most one particle can occupy any given spatial mode. Such systems could be for example lattice systems described by a Hubbard model (Fermi- or Bose-) with a strong repulsive on-site interaction.
In this context maximally entangled states exist only for the particular lattice filling factor given as a function of the particle species by Theorem \ref{thee2}. Note that for a bosonic system such states would exist in the superfluid phase \cite{gubbels} due to the non-integer filling factor. 
Methods to spatially resolve occupation numbers in individual sites of optical lattices have been developed \cite{bakr,endres,maxwell,edge,omran}
as well as techniques to spatially resolve \cite{gross} and address \cite{weitenberg} spin degrees of freedom in individual lattice sites. These techniques, if combined, would enable the study of spatially resolved entanglement in spin and spatial degrees of freedom of the type discussed in this paper.


The theoretical methods described in Sect. \ref{sec2} for identifying the group of SLOCC and constructing the invariants of its action can be straightforwardly generalized. But the task of finding all SLOCC invariants is likely to become more demanding for larger numbers of particles. Likewise, constructing a parametrization of the entanglement types similar to that of Eq. \ref{onewee00} or describing the full set of maximally entangled states is likely to be more demanding.

\section{Conclusions}
\label{sec5}
We have considered a system of two indistinguishable spin-$\frac{1}{2}$ particles delocalized on three spatial modes where each spatial mode is occupied by at most one particle. 
We have shown that this is the only spatial configuration of two delocalized particles where maximally entangled states exist.

To describe the set of entangled states we identified the group of SLOCC and constructed two SLOCC invariants from which entanglement measures can be derived. These two invariants generate the full algebra of SLOCC invariants for this system.
We parametrized the set of entangled states and described how, for different parameter ranges, different configurations of Bell-local and Bell-nonlocal correlations can be observed for different partitions of the system and between different parties. In particular we showed that all maximally entangled states, and states for which an entanglement measure takes a nonzero value, are Bell-nonlocal for all bi-partitions of the system, but not necessarily for a tri-partition. Furthermore, the set of states which are Bell-local for a tri-partition of the system was described. These states cannot be described as a superposition of entangled states of two localized particles.

We parametrized and described the set of maximally entangled states. In general these states cannot be understood in terms of entangled pairs of spin-$\frac{1}{2}$ particles that have been delocalized. Two exceptional states with a high degree of symmetry were identified. One corresponding to a pair of maximally entangled particles that has been delocalized without further entangling spin and spatial degrees of freedom, and one maximally entangled state being Bell-local for three parties making local measurements. 

In addition to this we described how the delocalization gives rise to a non-trivial generic symmetry of the system. To this symmetry corresponds a $2\pi/3$ fractional topological phase.

Finally, a necessary and sufficient condition for the existence of maximally entangled states in systems of arbitrary numbers of delocalized particles with arbitrary spin, where at most one particle can occupy each mode, was derived. It was also shown in this general case that all states for which an entanglement measure takes a nonzero value, are Bell-nonlocal for any bi-partition into a single spatial mode and the remaining modes.

\subsection*{Acknowledgement}
The author acknowledges discussions with Micha\l{} Oszmaniec, Paul Skrzypczyk, and Hui Khoon Ng. The author also thank the anonymous referees for constructive comments.
Support from the Marie Curie COFUND action through the ICFOnest program, the John Templeton Foundation, the Spanish Project FOQUS (FIS2013-46768-P), Spanish MINECO (Severo Ochoa grant SEV-2015-0522), and the Generalitat de Catalunya (SGR 875) is acknowledged.

\section*{Appendix}
\label{appendix}
\subsection{Constructing the SLOCC invariants}

All invariants of the group $G_{123}$ can be constructed through Cayley's Omega Process. Cayley's Omega Process is designed to find the invariants of the special linear group and since $G_{123}$ is isomorphic to $SL(2)^{\times 3}$ we can use it to find the $G_{123}$ invariants.  We describe this process for the case of bilinear forms since this is sufficient for our purposes. For a more general description see e.g. \cite{olver}. 

The first step is to represent the state as a bilinear form $f$

\begin{eqnarray}
f=\sum_{i,j=\uparrow,\downarrow} m_{ij0}x_i y_j +m_{i0j}x_iz_j+m_{0ij}y_iz_j,
\end{eqnarray}
where $m_{ijk}$ is the probability amplitude of $|ijk\rangle$.
The invariants are obtained recursively from $f$ through iterating a type of operation called transvection defined by the following two steps.
First, apply a partial differential operator $\Omega_i$, where $i=x,y,z$. For example $\Omega_x$ is defined by

\begin{eqnarray}\Omega_{x}=\frac{\partial^2}{\partial{x'_{\uparrow}}\partial{x''_{\downarrow}}}-\frac{\partial^2}{\partial{x''_{\uparrow}}\partial{x'_{\downarrow}}}\end{eqnarray}
on the product of the two forms $A(x'_i,y'_j,z'_k)$ and $B(x''_i,y''_j,z''_k)$. 
This is followed by the substitution of $x$ for $x'$ and $x''$. If several operators $\Omega_i$ are applied, all of the involved variables are substituted. For example the application of $\Omega_x\Omega_z$ is followed by the substitution $x$ for $x'$ and $x''$ and $z$ for $z'$ and $z''$.

If the result of a transvection is a new form, further transvections can be performed. 
In the case of bilinear forms the result of a single transvection is either a scalar or another bilinear form. Therefore, if we represent a bilinear form $A$ as a set of matrices $\{A_{ij},A_{jk},A_{ki}\}$ and likewise for $B$,  the transvections can then be described by matrix operations. A transvection with a single $\Omega_i$ results in a bilinear form which can be represented in matrix form as $\{A_{ij}^T\sigma_y B_{ij},A_{ij}^T\sigma_yB^{T}_{ki},A_{ki}\sigma_yB_{ij},A_ki\sigma_yB^{T}_{ki}\}$. A transvection with two operators $\Omega_i$ and $\Omega_j$ corresponds to the matrix operation $Tr[(A_{ij}^T)\sigma_y(B_{ij})\sigma_y]$.

To use this technique for constructing invariants of $G_{123}$ we must represent the state of the system in a suitable form.
The twelve probability amplitudes of the state vector can be arranged as three $2\times 2$ matrices $M_{AB}$, $M_{AC}$, and $M_{BC}$.
The matrix $M_{AB}$ contains all the amplitudes of the events where C does not find a particle when performing a projective measurement. 
It is given by

\begin{eqnarray}M_{AB}=\left( \begin{array}{cc}
\!\!m_{\uparrow\uparrow{0}} & \!m_{\uparrow\downarrow{0}} \\
\!\!m_{\downarrow\uparrow{0}} &\!m_{\downarrow\downarrow{0}}  \end{array} \!\right).\end{eqnarray}
Under transformations $R_{A}$ and $R_{B}$ by A and B respectively $M_{AB}$  transforms as $M_{AB}\to{R}_{A}M_{AB}{R}_{B}^{T}$. The matrices $M_{BC}$ and $M_{AC}$ are defined analogously.

With this representation we can construct the four invariants

\begin{align}\label{i1}
 I_{AB}=&Tr(M_{AB}\sigma_{y}M_{AB}^{T}\sigma_{y})=m_{\uparrow\uparrow{0}}m_{\downarrow\downarrow{0}}-m_{\uparrow\downarrow{0}}m_{\downarrow\uparrow{0}},\nonumber\\
  I_{BC}=&Tr(M_{BC}\sigma_{y}M_{BC}^{T}\sigma_{y})=m_{0\uparrow\uparrow}m_{0\downarrow\downarrow}-m_{0\uparrow\downarrow}m_{0\downarrow\uparrow},\nonumber\\ I_{AC}=&Tr(M_{AC}\sigma_{y}M_{CA}^{T}\sigma_{y})=m_{\uparrow{0}\uparrow}m_{\downarrow{0}\downarrow}-m_{\uparrow{0}\downarrow}m_{\downarrow{0}\uparrow},\nonumber\\
\end{align}
and 

\begin{align}
I_{2}=&iTr(M_{AB}\sigma_{y}M_{BC}\sigma_{y}M_{AC}^T\sigma_y)
\nonumber\\=&m_{\uparrow\uparrow{0}}(m_{\downarrow{0}\uparrow}m_{{0}\downarrow\downarrow}-m_{{0}\downarrow\uparrow}m_{\downarrow{0}\downarrow})\nonumber\\+&m_{\uparrow\downarrow{0}}(m_{0\uparrow\uparrow}m_{\downarrow{0}\downarrow}-m_{\downarrow{0}\uparrow}m_{{0}\uparrow\downarrow})\nonumber\\+&m_{\downarrow\uparrow{0}}(m_{{0}\downarrow\uparrow}m_{\uparrow{0}\downarrow}-m_{\uparrow{0}\uparrow}m_{{0}\downarrow\downarrow})\nonumber\\+&m_{\downarrow\downarrow{0}}(m_{\uparrow{0}\uparrow}m_{{0}\uparrow\downarrow}-m_{0\uparrow\uparrow}m_{\uparrow{0}\downarrow}).
\end{align}

The polynomial $I_{AB}$ describe the bipartite entanglement between A and B when the two particles are localized with A and B respectively, and is the well known concurrence polynomial \cite{wootters}. Note that $Tr(M_{AB}\sigma_{y}M_{AB}^{T}\sigma_{y})=2 det(M_{AB})$. The polynomials $I_{BC}$ and $I_{AC}$ are the defined in the same way but for B and C, and A and C respectively.

These four invariants can be used to construct all other invariants of $G_{123}$. To see this we make the following observations. First, note that in any result of a transvection $M_{AB}$ must be multiplied by either by $\sigma_{y}M_{AB}^T$ or $\sigma_{y}M_{BC}$. No other possibilities exist. The same applies to $M_{BC}$ and $M_{AC}$. This means that for any invariant $Tr[X]$, the matrix $X$ can be decomposed into products of $M_{AB}\sigma_{y}M_{AB}^{T}\sigma_{y}$, $M_{BC}\sigma_{y}M_{BC}^{T}\sigma_{y}$, $M_{AC}\sigma_{y}M_{CA}^{T}\sigma_{y}$, and $M_{AB}\sigma_{y}M_{BC}\sigma_{y}M_{AC}^T\sigma_y$.

Next, note that $M_{AB}\sigma_{y}M_{AB}^{T}\sigma_{y}=det(M_{AB})I$, i.e., it is proportional to the identity matrix.  Therefore, a trace of a product of this matrix with another matrix $X$ can be factored as

\begin{eqnarray}Tr(M_{AB}\sigma_{y}M_{AB}^{T}\sigma_{y}X)
=Tr(M_{AB}\sigma_{y}M_{AB}^{T}\sigma_{y})Tr(X).\nonumber\\
\end{eqnarray}
The same applies to $M_{BC}\sigma_{y}M_{BC}^{T}\sigma_{y}$ and $M_{AC}\sigma_{y}M_{AC}^{T}\sigma_{y}$
The only invariants which cannot be algebraically decomposed into $I_{AB},I_{BC},I_{AC},I_2$ in this way are of the form

\begin{eqnarray}
Tr[(M_{AB}\sigma_{y}M_{BC}\sigma_{y}M_{AC}^T\sigma_y)^n],
\end{eqnarray}
for some $n$.
However, these invariants are decomposable into $I_{AB},I_{BC},I_{AC},I_2$ in a different way. To see this let us denote

\begin{eqnarray}
M_{AB}\sigma_{y}M_{BC}\sigma_{y}M_{AC}^T\sigma_y=\left( \begin{array}{cc}
\!\!F_1 & \!G_1 \\
\!\!K_1 &\!L_1  \end{array} \!\right).
\end{eqnarray}
Then $I_2=i(F_1+L_1)$. Furthermore, we note that $G_1K_1=I_{AB}I_{BC}I_{AC}+F_1L_1=I_{AB}I_{BC}I_{AC}-F_1(F_1+iI_2)$. Using these relations it can be shown by induction that
for any $n$ the invariant $Tr[(M_{AB}\sigma_{y}M_{BC}\sigma_{y}M_{AC}^T\sigma_y)^n]$ is a polynomial in $I_{AB},I_{BC},I_{AC}$ and $I_2$. 

As the first step consider 
\begin{eqnarray}
\left( \begin{array}{cc}
\!\!F_n & \!G_n \\
\!\!K_n &\!L_n  \end{array} \!\right)\equiv\left( \begin{array}{cc}
\!\!F_1 & \!G_1 \\
\!\!K_1 &\!L_1  \end{array} \!\right)^n.
\end{eqnarray}
Assume that $F_n$ and $L_n$, as well as $G_nK_1$ and $K_nG_1$ are polynomials in $I_{AB},I_{BC},I_{AC},I_2$ and $F_1$. Then for
\begin{align}
&\left( \begin{array}{cc}
\!\!F_{(n+1)} & \!G_{(n+1)} \\
\!\!K_{(n+1)} &\!L_{(n+1)}  \end{array} \!\right)=\left( \begin{array}{cc}
\!\!F_nF_1+G_nK_1 & \!F_nG_1+G_nL_1 \\
\!\!K_nF_1+L_nK_1 &\!K_nG_1+L_nL_1  \end{array} \!\right)\nonumber\\
&=\left( \begin{array}{cc}
\!\!F_nF_1+G_nK_1 & \!F_nG_1-G_n(F_1+iI_2) \\
\!\!K_nF_1+L_nK_1 &\!K_nG_1-L_n(F_1+iI_2)  \end{array} \!\right),
\end{align}
it is clear that the diagonal elements $F_{(n+1)}=F_nF_1+G_nK_1$ and $L_{(n+1)}=K_nG_1-L_n(F_1+iI_2)$ are polynomials in $I_{AB},I_{BC},I_{AC},I_2$ and $F_1$. Furthermore,

\begin{align} &G_{(n+1)}K_1=[F_nG_1-G_n(F_1+iI_2)]K_1=\nonumber\\
&=F_n[I_{AB}I_{BC}I_{AC}-F_1(F_1+iI_2)]-G_nK_1(F_1+iI_2), 
\end{align}
as well as
\begin{align}&K_{(n+1)}G_1=(K_nF_1+L_nK_1)G_1=\nonumber\\
&=K_nG_1F_1+L_n[I_{AB}I_{BC}I_{AC}-F_1(F_1+iI_2)],
\end{align} 
are then also polynomials in  $I_{AB},I_{BC},I_{AC},I_2$ and $F_1$.

Thus, if $F_n$ and $L_n$, as well as $G_nK_1$ and $K_nG_1$ are a polynomials in $I_{AB},I_{BC},I_{AC},I_2$ and $F_1$ for some $n$ it is true for $n+1$.
Next note that it holds true for $n=1$. By induction it therefore holds for any $n$.
It follows that  $Tr[(M_{AB}\sigma_{y}M_{BC}\sigma_{y}M_{AC}^T\sigma_y)^n]$ is a polynomial in $I_{AB},I_{BC},I_{AC},I_2$ and $F_1$ for any $n$. However, since this trace is a $G_{123}$ invariant by definition and $F_1$ is not an invariant, $F_1$ cannot be a factor of any term in such a polynomial. Thus, $Tr[(M_{AB}\sigma_{y}M_{BC}\sigma_{y}M_{AC}^T\sigma_y)^n]$ is a polynomial in $I_{AB},I_{BC},I_{AC},$ and $I_2$ for any $n$.

We can conclude that $I_{AB},I_{BC},I_{AC},$ and $I_2$ generate all the $G_{123}$ invariants.
The invariant $I_2$ is invariant also under $G_8$ but $I_{AB},I_{BC},I_{AC}$ are not. The product $I_1\equiv I_{AB}I_{BC}I_{AC}$ is the simplest product of these three polynomials that is invariant under $G_8$. Together $I_1$ and $I_2$ generate all the $G_8$ invariants contained in the algebra of $G_{123}$ invariants. These $G_{123}$ invariants which are also invariant under $G_8$ constitute the full set of $G_{{\mbox{\tiny\itshape SLOCC}}}$ invariants since $G_{123}$ and $G_8$ commute. Thus we can conclude that the invariants of $G_{{\mbox{\tiny\itshape SLOCC}}}$ are $I_1$ and $I_2$.

\subsubsection{Constructing unitary invariants}

Using similar methods one can construct unitary invariants by substituting some of the transvections, represented by the matrix operation $A_{ij}\sigma_y B^T_{kj}$, by $A_{ij}B^{\dagger}_{kj}$. This works since unitary matrices satisfy $UU^{\dagger}=1$ and thus $A_{ij}UU^{\dagger}B^{\dagger}_{kj}=A_{ij}B^{\dagger}_{kj}$. The invariants under he  action of the unitary subgroup of $G_{123}$ can therefore be found this way. The three invariants used in Section \ref{kk} are constructed as

\begin{align}
I_{3}&=Tr[(M_{AB}^T\sigma_{y}M_{AC})(M_{AB}^T\sigma_{y}M_{AC})^{\dagger}]\nonumber\\
=&|m_{\downarrow\uparrow{0}}m_{\uparrow{0}\uparrow}-m_{\uparrow\uparrow{0}}m_{\downarrow{0}\uparrow}|^2+|m_{\downarrow\uparrow{0}}m_{\uparrow{0}\downarrow}-m_{\uparrow\uparrow{0}}m_{\downarrow{0}\downarrow}|^2\nonumber\\+&|m_{\downarrow\downarrow{0}}m_{\uparrow{0}\uparrow}-m_{\uparrow\downarrow{0}}m_{\downarrow{0}\uparrow}|^2+|m_{\downarrow\downarrow{0}}m_{\uparrow{0}\downarrow}-m_{\uparrow\downarrow{0}}m_{\downarrow{0}\downarrow}|^2,\nonumber\\
I_{4}&=Tr[(M_{BC}\sigma_{y}M_{AC}^T)(M_{BC}\sigma_{y}M_{AC}^T)^{\dagger}]\nonumber\\
=&|m_{{0}\uparrow\downarrow}m_{\uparrow{0}\uparrow}-m_{0\uparrow\uparrow}m_{\uparrow{0}\downarrow}|^2+|m_{{0}\uparrow\downarrow}m_{\downarrow{0}\uparrow}-m_{0\uparrow\uparrow}m_{\downarrow{0}\downarrow}|^2,\nonumber\\+&|m_{{0}\downarrow\downarrow}m_{\uparrow{0}\uparrow}-m_{{0}\downarrow\uparrow}m_{\uparrow{0}\downarrow}|^2+|m_{{0}\downarrow\downarrow}m_{\downarrow{0}\uparrow}-m_{{0}\downarrow\uparrow}m_{\downarrow{0}\downarrow}|^2\nonumber\\
I_{5}&=Tr[(M_{AB}\sigma_{y}M_{BC})(M_{AB}\sigma_{y}M_{BC})^{\dagger}]\nonumber\\
=&|m_{\uparrow\downarrow{0}}m_{0\uparrow\uparrow}-m_{\uparrow\uparrow{0}}m_{{0}\downarrow\uparrow}|^2+|m_{\uparrow\downarrow{0}}m_{{0}\uparrow\downarrow}-m_{\uparrow\uparrow{0}}m_{{0}\downarrow\downarrow}|^2\nonumber\\+&|m_{\downarrow\downarrow{0}}m_{0\uparrow\uparrow}-m_{\downarrow\uparrow{0}}m_{{0}\downarrow\uparrow}|^2+|m_{\downarrow\downarrow{0}}m_{{0}\uparrow\downarrow}-m_{\downarrow\uparrow{0}}m_{{0}\downarrow\downarrow}|^2.\nonumber\\
\end{align}

It can easily be verified that these are also invariants of the unitary subgroup of $G_8$.

\subsection{SLOCC invariants when one of the particles is localized}
If one particle has a well defined position, i.e., is localized with a given party the group of local unitary operations and SLOCC operations is different and this has implications for the construction of SLOCC invariants. For example, let a particle be localized with party A. Then $\lambda_8$ is not a generator of the local action of the unitary operations or SLOCC on the mode of party A.

In this scenario we can construct the $G_{123}$ invariants as before, since this subgroup is still part of the SLOCC operations. However, since one particle is always with A the invariants $I_2$ and $I_{BC}$ are trivially zero. The $G_{123}$ invariants are thus generated by $I_{AB}$ and $I_{BC}$ alone. The next step is to construct polynomials that are invariant under the action generated by $\lambda_8$ on the modes of party $B$ and $C$ from $I_{AB}$ and $I_{BC}$. This is not possible since these two polynomials are not invariant under the operation $I\times e^{\alpha\lambda_8}\times e^{\alpha\lambda_8}$ and neither is any product of them. Thus, no SLOCC invariants exist in this scenario, and therefore no entanglement monotones \cite{vidal} or maximally entangled states \cite{verstraete2003,kempfness}.

\end{document}